\setlist[enumerate]{leftmargin=2em,itemindent=0em, labelindent=0pt,labelwidth=1.5em,labelsep=.5em, align=left, noitemsep}
\newlist{txtenum}{enumerate}{1}
\setlist[txtenum]{leftmargin=0em,itemindent=1.5em, labelindent=0pt,labelwidth=1em,labelsep=.5em, align=left}
\theoremstyle{plain}
\newtheorem{theorem}{Theorem}
\newtheorem*{theorem*}{Theorem}
\newtheorem*{proposition*}{Proposition}
\newtheorem*{corollary*}{Corollary}
\newtheorem{lemma}[theorem]{Lemma}
\newtheorem*{lemma*}{Lemma}
\newtheorem*{observation*}{Observation}
\newtheorem*{conjecture*}{Conjecture}
\newtheorem{conjecture}[theorem]{Conjecture}
\newtheorem*{question*}{Question}
\newtheorem*{questions*}{Questions}
\newtheorem*{problem*}{Problem}
\newtheorem*{problems*}{Problems}
\newtheorem*{openproblem*}{Open Problem}
\theoremstyle{definition}
\newtheorem*{definition*}{Definition}
\newtheorem*{example*}{Example}
\newtheorem*{exercise*}{Exercise}
\theoremstyle{remark}
\newtheorem*{remark*}{Remark}
\newtheorem*{remarks*}{Remarks}
\newtheorem*{claim*}{Claim}
\newcommand{\subclass}[1]{}
\newcommand{\enumTi}[1]{\renewcommand{\theenumi}{#1}}
\newcommand{\alphenumi}{\enumTi{\alph{enumi}}}
\newcommand{\romenumi}{\enumTi{\roman{enumi}}}
\newlength{\hspaceforlengthglumpf}
\renewcommand{\em}{\sl}
\DeclareMathOperator{\rk}{rk}
\newcommand{\lt}{\left}
\newcommand{\rt}{\right}
\newcommand{\Tp}{{\mspace{-1mu}\scriptscriptstyle\top\mspace{-1mu}}}
\newcommand{\abs}[1]{{\lt\lvert{#1}\rt\rvert}}
\newcommand{\bbabs}[1]{{\biggl\lvert{#1}\biggr\rvert}}
\newcommand{\nfrac}[2]{{\nicefrac{#1}{#2}}}
\newcommand{\FF}{\mathbb{F}}
\newcommand{\RR}{\mathbb{R}}
\newcommand{\eps}{\varepsilon}
\newlength{\algotabbingwidth}
\newcommand{\mypar}{\par\medskip\noindent}
\newcommand{\myparbig}{\par\bigskip\noindent}
\DeclareMathOperator{\GF}{GF}
\renewcommand{\rho}{\varrho}
\begin{document}
\title{The (minimum) rank of typical fooling-set matrices}%
\author{Mozhgan Pourmoradnasseri, Dirk Oliver Theis\thanks{Supported by the Estonian Research Council, ETAG (\textit{Eesti Teadusagentuur}), through PUT Exploratory Grant \#620, and by the European Regional Development Fund through the Estonian Center of Excellence in Computer Science, EXCS.}\\[1ex]
  \small Institute of Computer Science {\tiny of the } University of Tartu\\
  \small \"Ulikooli 17, 51014 Tartu, Estonia\\
  \small \texttt{$\{$dotheis,mozhgan$\}$@ut.ee}%
}
%

\date{Mon Dec  5 21:37:55 EET 2016}

\maketitle
\begin{abstract}
  A fooling-set matrix has nonzero diagonal, but at least one in every pair of diagonally opposite entries is~0.  Dietzfelbinger et al.~'96 proved that the rank of such a matrix is at least~$\sqrt n$.  It is known that the bound is tight (up to a multiplicative constant).

  We ask for the \textit{typical} minimum rank of a fooling-set matrix: For a fooling-set zero-nonzero pattern chosen at random, is the minimum rank of a matrix with that zero-nonzero pattern over a field~$\FF$ closer to its lower bound $\sqrt{n}$ or to its upper bound~$n$?  We study random patterns with a given density~$p$, and prove an $\Omega(n)$ bound for the cases when
  \begin{enumerate}[label=(\alph{*}),nosep]
  \item $p$ tends to~$0$ quickly enough;
  \item $p$ tends to~$0$ slowly, and $\abs{\FF}=O(1)$;
  \item $p\in\lt]0,1\rt]$ is a constant.
  \end{enumerate}
  We have to leave open the case when $p\to 0$ slowly and $\FF$ is a large or infinite field (e.g., $\FF=\GF(2^n)$, $\FF=\RR$).
\end{abstract}

\section{Introduction}\label{sec:intro}
Let $f\colon X\times Y\to \{0,1\}$ be a function.  A \textit{fooling set} of size~$n$ is a family $(x_1,y_1),\dots,(x_n,y_n) \in X\times Y$ such that $f(x_i,y_i) = 1$ for all~$i$, and for $i\ne j$, at least one of $f(x_i,y_j)$ of $f(y_i,y_j)$ is~0.  Sizes of fooling sets are important lower bounds in Communication Complexity (see, e.g., \cite{Kushilevitz-Nisan:Book:97,Klauck-deWolf:foolQ:2013}) and the study of extended formulations (e.g., \cite{Fiorini-Kaibel-Pashkovich-Theis:CombLB:13,Beasley-Klauck-Lee-Theis:Dagstuhl:13}).

There is an \textsl{a priori} upper bound on the size of fooling sets due to Dietzfelbinger et al.~\cite{Dietzfelbinger-Hromkovic-Schnitger:96}, based on the rank of a matrix associated with~$f$.
Let~$\FF$ be an arbitrary field.  The following is a slight generalization of the result in~\cite{Dietzfelbinger-Hromkovic-Schnitger:96} (see the appendix for a proof).

\begin{lemma}\label{lem:dietzfelbinger-minrk}
  No fooling set in~$f$ is larger than the square of $\min_A \rk_\FF(A)$, where the minimum ranges\footnote{%
    This concept of \textit{minimum rank} differs from the definition used in the context of index coding \cite{Haviv-Langberg:minrk:2012,Golovnev-Regev-Weinstein:minrk:2016}.  It is closer to the minimum rank of a graph, but there the matrix~$A$ has to be symmetric while the diagonal entries are unconstrained. %
  } %
  over all $X\times Y$-matrices~$A$ over~$\FF$ with $A_{x,y} = 0$ iff $f(x,y)=0$.
\end{lemma}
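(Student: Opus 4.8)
The plan is to reduce the statement to a purely matrix-theoretic claim about square ``fooling-set matrices'' and then run a coordinate-free version of the classical rank-squared argument that is valid over an arbitrary field. Fix a fooling set $(x_1,y_1),\dots,(x_n,y_n)$ in $f$ and an arbitrary $X\times Y$-matrix $A$ over $\FF$ with $A_{x,y}=0$ iff $f(x,y)=0$; it suffices to show $n\le(\rk_\FF A)^2$. First I would pass from $A$ to its $n\times n$ submatrix $B$ with $B_{i,j}:=A_{x_i,y_j}$. Since $B$ arises from $A$ by selecting rows and columns, $\rk_\FF B\le\rk_\FF A$, and the two properties of $A$ translate into: $B_{i,i}=A_{x_i,y_i}\ne 0$ because $f(x_i,y_i)=1$; and $B_{i,j}B_{j,i}=0$ for $i\ne j$ because at least one of $f(x_i,y_j),f(x_j,y_i)$ vanishes. (Incidentally the $x_i$ are pairwise distinct, as are the $y_j$, but the argument does not need this.) So the task becomes: \emph{every $n\times n$ matrix $B$ over $\FF$ with nonzero diagonal and with $B_{i,j}B_{j,i}=0$ for all $i\ne j$ satisfies $n\le(\rk_\FF B)^2$.}

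Set $r:=\rk_\FF B$ and take a rank factorization $B=UV$ with $U\in\FF^{n\times r}$ and $V\in\FF^{r\times n}$; write $a_i^\Tp$ for the $i$-th row of $U$ and $b_j$ for the $j$-th column of $V$, so that $B_{i,j}=a_i^\Tp b_j$ for all $i,j$. I would then consider the $n$ rank-at-most-one matrices $a_i b_i^\Tp\in\FF^{r\times r}$, $i=1,\dots,n$, which live in the $r^2$-dimensional space $\FF^{r\times r}$, and prove that they are linearly independent. Granting this, $n\le r^2=(\rk_\FF B)^2\le(\rk_\FF A)^2$, and taking the minimum over all admissible $A$ finishes the proof.

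For the independence, suppose $\sum_i\lambda_i a_i b_i^\Tp=0$. Multiplying this $r\times r$ identity on the left by $b_j^\Tp$ and on the right by $a_j$ yields $\sum_i\lambda_i(b_j^\Tp a_i)(b_i^\Tp a_j)=0$; since a $1\times1$ product equals its own transpose, $b_j^\Tp a_i=a_i^\Tp b_j=B_{i,j}$ and $b_i^\Tp a_j=B_{j,i}$, so $\sum_i\lambda_i B_{i,j}B_{j,i}=0$. Every term with $i\ne j$ vanishes by the fooling-set property, leaving $\lambda_j B_{j,j}^2=0$, whence $\lambda_j=0$ because $B_{j,j}\ne 0$ and $\FF$ has no zero divisors. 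I do not expect a genuine obstacle here: the points to watch are the passage to the square submatrix and the replacement of the Euclidean inner product---which is how the bound is usually argued over $\RR$ or $\CC$---by an honest bilinear pairing, so that nothing beyond the field axioms is used, and in fact only the absence of zero divisors is invoked, and only once, to pass from $B_{j,j}\ne 0$ to $B_{j,j}^2\ne 0$.
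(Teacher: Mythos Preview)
Your proof is correct and is essentially the classical argument the paper reproduces in its appendix: there it is phrased via the Kronecker product $A\otimes A^{\Tp}$, observing that the submatrix on row indices $(x_i,y_i)$ and column indices $(y_j,x_j)$ has entries $A_{x_i,y_j}A_{x_j,y_i}$, hence is diagonal with nonzero diagonal, so $n\le\rk(A\otimes A^{\Tp})=(\rk A)^2$. Your rank factorization $B=UV$ and the linear independence of the rank-one matrices $a_ib_i^{\Tp}\in\FF^{r\times r}$ is the same computation unpacked, with the fact $\rk(A\otimes A^{\Tp})=(\rk A)^2$ replaced by the equivalent $\dim\FF^{r\times r}=r^2$.
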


It is known that, for fields~$\FF$ with nonzero characteristic, this upper bound is asymptotically attained~\cite{FriesenTheis13}, and for all fields, it is attained up to a multiplicative constant~\cite{Friesen-Hamed-Lee-Theis:fool:15}.  These results, however, require sophisticated constructions.  In this paper, we ask how useful that upper bound is for \textit{typical} functions~$f$.

Put differently, a \textit{fooling-set pattern of size~$n$} is a matrix~$R$ with entries in $\{0,1\} \subseteq \FF$ with $R_{k,k}=1$ for all~$k$ and $R_{k,\ell}R_{\ell,k}=0$ whenever $k\ne\ell$.  We say that a fooling-set pattern of size~$n$ has \textit{density} $p\in\lt]0,1\rt]$, if it has exactly $\lceil p\binom{n}{2} \rceil$ off-diagonal 1-entries.  So, the density is roughly the quotient $(\abs{R}-n)/\binom{n}{2}$, where $\abs{\cdot}$ denotes the Hamming weight, i.e., the number of nonzero entries.  The densest possible fooling-set pattern has $\binom{n}{2}$ off-diagonal ones (density $p=1$).

For any field~$\FF$ and $y\in \FF$, let
$\sigma(y) :=0$, if $y=0$, and $\sigma(y) :=1$, otherwise.
For a matrix (or vector, in case $n=1$) $M \in \FF^{m\times n}$,  define the \textit{zero-nonzero pattern of~$M$,} $\sigma(M)$, as the matrix in $\{0,1\}^{m\times n}$ which results from applying~$\sigma$ to every entry of~$M$.

\medskip\noindent%
This paper deals with the following question: \textsl{For a fooling-set pattern chosen at random, is the minimum rank of closer to its lower bound $\sqrt{n}$ or to its trivial upper bound~$n$?}  The question turns out to be surprisingly difficult.  We give partial results, but we must leave some cases open.
%
%
%
The distributions we study are the following:
\begin{description}
\item[$Q(n)$] denotes a fooling-set pattern drawn uniformly at random from all fooling-set patterns of size~$n$;
\item[$R(n,{p})$] denotes a fooling-set patterns drawn uniformly at random from all fooling-set patterns of size~$n$ with density~${p}$.
\end{description}
We allow that the density depends on the size of the matrix: ${p} = {p}(n)$.  From now on,  $Q=Q(n)$ and $R=R(n,{p})$ will denote these random fooling-set patterns.

Our first result is the following.  As customary, we use the terminology ``asymptotically almost surely, a.a.s.,'' to stand for ``with probability tending to~1 as~$n$ tends to infinity''.
\begin{theorem}\label{thm:main}
  \begin{enumerate}[label=(\alph*)]
  \item\label{thm:main:tiny-p} For every field $\FF$, if $p=O(1/n)$, then, a.a.s., the minimum rank of a matrix with zero-nonzero pattern $R(n,p)$ is $\Omega(1)$.
  \item\label{thm:main:small-p} Let $\FF$ be a finite field and $F := \abs{\FF}$.  (We allow~$F$ to grow with~$n$.)  If $100\max(1,\ln\ln F)/n \le p \le 1$, then the minimum rank of a matrix over~$\FF$ with zero-nonzero pattern $R(n,p)$ is
    \begin{equation*}
      \Omega\Bigl( \frac{\log(1/p)}{\log(1/p) + \log(F)}\;n \Bigr) = \Omega(n/\log(F)).
    \end{equation*}
  \item\label{thm:main:const-p} For every field $\FF$, if $p\in\lt]0,1\rt]$ is a constant, then the minimum rank of a matrix with zero-nonzero pattern $R(n,p)$ is $\Omega(1)$.  (The same is true for zero-nonzero pattern~$Q(n)$.)
  \end{enumerate}
\end{theorem}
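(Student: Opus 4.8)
The plan is to treat the three regimes with two tools: an elementary ``bounded row-support'' rank bound for \ref{thm:main:tiny-p}, and a first-moment (entropy-counting) argument for \ref{thm:main:small-p} and \ref{thm:main:const-p}. For \ref{thm:main:tiny-p} I would isolate the deterministic fact that if $A\in\FF^{n\times n}$ has $A_{ii}\neq0$ for all $i$ and every row has at most $\Delta$ nonzero off-diagonal entries, then $\rk_\FF A\ge n/(\Delta+1)$: pick rows $I$ with $|I|=\rk A$ spanning the row space, write each row $i$ as a combination of the rows in $I$, and read off the (nonzero) $(i,i)$-entry to get some $k(i)\in I$ with $A_{k(i),i}\neq0$; each fixed $k\in I$ can serve for at most $1+\Delta$ indices $i$, so $n\le(1+\Delta)\,\rk A$. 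Now let $G$ be the graph of active pairs of the pattern $R$ (edge $\{i,j\}$ iff $R_{ij}$ or $R_{ji}$ is $1$). Under $p=O(1/n)$ it has $m=\lceil p\binom n2\rceil=O(n)$ edges, so for a suitable constant $\Delta_0$ all but at most $n/2$ vertices have $G$-degree $\le\Delta_0$; restricting any matrix $A$ realizing $R$ to this low-degree set $S$, the submatrix $A[S,S]$ has nonzero diagonal and $\le\Delta_0$ off-diagonal nonzeros per row, so $\rk A\ge\rk A[S,S]\ge|S|/(\Delta_0+1)=\Omega(n)$. This holds for every density-$p$ pattern, hence a.a.s.

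For \ref{thm:main:small-p}--\ref{thm:main:const-p} I would union-bound over low-rank realizations. Since $R(n,p)$ is uniform on density-$p$ fooling-set patterns,
\[
 \Prb[\operatorname{minrk}(R)\le r]\ \le\ \frac{\#\{\text{zero-nonzero patterns of rank-}\le r\text{ matrices over }\FF\}}{\binom{\binom n2}{m}2^{m}},
\]
and the denominator has binary logarithm $\Omega\!\big(pn^2\log(1/p)\big)$ (plus an $\Omega(pn^2)$ always, from the $2^m$ factor). For finite $\FF$ the numerator is at most $F^{2rn}$ (factor a rank-$\le r$ matrix as $BC$ with $B\in\FF^{n\times r}$, $C\in\FF^{r\times n}$), so the ratio tends to $0$ once $r=O\!\big(pn\log(1/p)/\log F\big)$; for constant $p$ this is $\Omega(n/\log F)$, hence $\Omega(n)$ when $F=O(1)$, and since $Q(n)$ has density concentrated near $2/3$ a routine concentration step transfers the constant-$p$ bound to $Q(n)$. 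For an arbitrary, possibly infinite, field one replaces the count of matrices by a bound on the number of zero-nonzero patterns carried by the determinantal variety $\{\rk\le r\}$, of dimension $\le 2rn$ and cut by the $n^2$ coordinate hyperplanes, which is $2^{O(rn\log n)}$; comparing with the denominator yields $\operatorname{minrk}(R)=\Omega(n/\log n)$ for constant $p$ over every field.

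The hard part is the sharp bound $\Omega\!\big(\tfrac{\log(1/p)}{\log(1/p)+\log F}\,n\big)$ when $p\to0$ slowly: the crude ratio above gives only $\Omega\!\big(pn\log(1/p)/\log F\big)=o(n)$ even for $F=O(1)$, so the lossy estimate ``$\le F^{2rn}$'' must be avoided. I would bound $\Prb[\operatorname{minrk}(R)\le r]$ directly via the factorization: fix $B\in\FF^{n\times r}$ (cost $F^{rn}$) and estimate, over random $R$, the probability that some $C$ is compatible. The $j$-th column $c_j$ of $C$ must satisfy $b_i^{\top}c_j=0$ for every $i$ with $R_{ij}=0$ --- i.e.\ for all but the $\approx pn/2$ in-neighbours of $j$ --- yet $b_j^{\top}c_j\neq0$, which forces the $\Theta(n)$ vectors $\{b_i:R_{ij}=0\}$ to span only an $(r-1)$-dimensional subspace, and this must happen for \emph{every} $j$ at once. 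Bounding how unlikely that event is --- a union bound over $B$ together with a character-sum estimate for the few surviving degrees of freedom in each column --- should produce the missing $\log(1/p)$, with $p\ge100\max(1,\ln\ln F)/n$ being exactly the threshold making each row's linear constraint ``random enough'' for the estimate to bite. The transition between ``density exactly $p$'' and ``density about $p$'' is routine hypergeometric concentration; and the parallel sharpening of \ref{thm:main:const-p} over infinite fields --- removing the spurious $\log n$ by using that only fooling-set patterns are counted --- is the remaining delicate point.
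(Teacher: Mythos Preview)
Your argument for~\ref{thm:main:tiny-p} is correct and in fact tidier than the paper's. The paper builds a graph on $[n]$ from the pattern and extracts a large triangular principal submatrix via the independence number of that random graph (Tur\'an for $p<C/n$, a Frieze-type theorem otherwise). Your deterministic inequality $\rk A\ge n/(\Delta+1)$ for matrices with nonzero diagonal and row-support $\le\Delta+1$, combined with the pigeonhole that $O(n)$ edges leave $\Omega(n)$ vertices of bounded degree, reaches the same conclusion with no random-graph machinery and even yields a bound valid for \emph{every} density-$p$ pattern, not just a.a.s.

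For~\ref{thm:main:small-p} you correctly diagnose that the $F^{2rn}$ count is too coarse, but your proposed fix (union bound over $B$, then probability over $R$ that a compatible $C$ exists) is only a hope, not an argument; nothing in the sketch explains why the event ``for every $j$, the $(1-p/2)n$ rows $\{b_i:R_{ij}=0\}$ span $\le r-1$ dimensions'' has probability $\ll F^{-rn}$. The paper's device is different and is the missing idea. Call $T=I\times[n]\cup[n]\times I$ with $|I|=2r$ a \emph{tee shape}. Any matrix $M$ of rank $\le r$ contains a tee-shaped restriction of rank $\rk M$, and that restriction \emph{uniquely determines} $M$ among matrices of its rank. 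Hence one counts tee matrices rather than full matrices; and by hypergeometric concentration, a.a.s.\ every tee shape of order $2r$ in $R(n,p)$ carries at most $m\approx 15pr(n-r)$ nonzeros. The number of such tee matrices is at most $\binom{n}{2r}\binom{2r(n-r)}{m}(2F)^{m}$, and comparing this to $\binom{\binom n2}{p\binom n2}2^{p\binom n2}$ yields the stated $\Omega\bigl(n\log(1/p)/(\log(1/p)+\log F)\bigr)$. The factor of $p$ you were losing disappears because the $F$-cost is paid only on the $O(prn)$ nonzero tee positions, not on all $2rn$ entries of a factorization.

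For~\ref{thm:main:const-p} you obtain $\Omega(n)$ only for bounded~$F$ and $\Omega(n/\log n)$ over arbitrary fields, and you flag the missing $\log n$ as ``the remaining delicate point''; but this is exactly where the paper's content lies. The right tool is the R\'onyai--Babai--Ganapathy theorem: for polynomials $f_1,\dots,f_h$ of degree $\le d$ in $N$ variables over \emph{any} field, the number of zero-nonzero patterns of Hamming weight $\le m$ is at most $\binom{N+md}{N}$. Write $M=XY$ and row-reduce $Y$ to a fixed ``Gaussian'' shape; there are only $O\bigl(\binom{n}{r}\bigr)$ such shapes, and for each one the off-diagonal entries of $M$ are degree-$2$ polynomials in $N\le 2rn-r^2/2$ free variables. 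With $m=p\binom n2$ the bound $\binom{N+2m}{N}$, summed over shapes and divided by the number of density-$p$ fooling-set patterns, is $o(1)$ once $r\le\rho(p)\,n$ for a constant $\rho(p)>0$. Your variety-dimension heuristic has the right $N=O(rn)$ but a generic hyperplane-arrangement bound multiplies the exponent by $\log(\text{number of coordinates})=\Theta(\log n)$; R\'onyai--Babai--Ganapathy replaces this by $\log\bigl((N+2m)/N\bigr)=O_p(1)$, which is precisely the saving you were missing.
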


Since the constant in the big-$\Omega$ in Thereom~\ref{thm:main}\ref{thm:main:const-p} tends to~$0$ with $p\to 0$,
the proof technique used for constant~$p$ does not work for $p=o(1)$; moreover, the bound in~\ref{thm:main:small-p} does not give an $\Omega(n)$ lower bound for infinite fields, or for large finite fields, e.g., $\GF(2^n)$.  We conjecture that the bound is still true (see Lemma~\ref{lem:indep-no} for a lower bound):

\begin{conjecture}
  For every field~$\FF$ and for all $p=p(n)$, the minimum rank of a fooling-set matrix with random zero-nonzero pattern $R(n,{p})$ is $\Omega( n )$.
\end{conjecture}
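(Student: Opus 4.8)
\medskip
\noindent\textbf{A plan of attack.}
The argument would rest on three ingredients, of which only the first two are presently available. The first is a field-free combinatorial bound. Let $R=R(n,{p})$ be the random pattern and let $G$ be the graph on $[n]$ whose edges are the $\lceil {p}\binom n2\rceil$ pairs $\{k,\ell\}$ with $R_{k,\ell}+R_{\ell,k}=1$. For every $A$ with $\sigma(A)=R$ and every $S\subseteq[n]$, the submatrix $A[S,S]$ has nonzero diagonal and at most $e(G[S])$ nonzero off-diagonal entries; since a diagonal matrix with nonzero diagonal is invertible and each altered entry is a rank-one perturbation,
\begin{equation}\label{eq:combbd}
  \rk_\FF(A)\ \ge\ \max_{S\subseteq[n]}\Bigl(\,\abs S-e(G[S])\,\Bigr)\ \ge\ \alpha(G),
\end{equation}
a mild strengthening of the bound recorded in Lemma~\ref{lem:indep-no}. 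When ${p}=O(1/n)$, $G$ has $O(n)$ edges, so $\alpha(G)=\Omega(n)$ by Tur\'an's bound and \eqref{eq:combbd} already yields Theorem~\ref{thm:main}\ref{thm:main:tiny-p}; when $\abs\FF$ is at most a polynomial in $1/{p}$, the counting argument of Theorem~\ref{thm:main}\ref{thm:main:small-p} applies. So the entire difficulty is concentrated in one window: $\abs\FF$ very large or infinite (say $\FF=\RR$ or $\GF(2^n)$), together with a density ${p}=\omega(1/n)$ that still tends to~$0$.

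\smallskip
\noindent In that window I would pass to an ``orthogonal-representation'' picture. If $A$ realizes $R$ with $r:=\rk_\FF(A)$, factor $A=PQ$ with $P\in\FF^{n\times r}$, $Q\in\FF^{r\times n}$ of full rank, and write $p_1,\dots,p_n\in\FF^r$ for the rows of $P$ and $q_1,\dots,q_n\in\FF^r$ for the columns of $Q$, so $A_{k,\ell}=p_k^{\Tp}q_\ell$. Then the $q_\ell$ span $\FF^r$, each $p_k\ne 0$, and for each $k$ the vectors $\{q_\ell:\ell\ne k,\ R_{k,\ell}=0\}$ --- all $q_\ell$ except the $1+\deg^+(k)=O(1+{p}n)$ out-neighbours of $k$ in $D:=\sigma(A)$ --- lie in the hyperplane $p_k^{\perp}$. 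Grouping the indices by the hyperplane they induce, a group $I$ forces $\{q_\ell:\ell\notin\bigcap_{k\in I}(\{k\}\cup N^+(k))\}$ into a single hyperplane, so $\bigcap_{k\in I}(\{k\}\cup N^+(k))\ne\emptyset$; picking $v$ in that intersection gives $I\subseteq\{v\}\cup N^-(v)$, hence $\abs I\le 1+\Delta(G)=O({p}n)$ a.a.s., so at least $\Omega(1/{p})$ distinct hyperplanes occur. Over a large field this is the wall: for finite $\FF$ it recovers only $r=\Omega\bigl(\log(1/{p})/\log\abs\FF\bigr)$, with \emph{no} factor of~$n$, and for infinite $\FF$ it gives nothing, because finitely many hyperplanes through a common codimension-two subspace already cover arbitrarily many of the $q_\ell$. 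The combinatorial bounds I know of --- induced acyclic subdigraphs, near-triangular submatrices, \eqref{eq:combbd}, and the escape-set count above --- all seem to saturate at $r=\widetilde O(1/{p})=o(n)$ once ${p}=\omega(1/n)$, while the counting bound of Theorem~\ref{thm:main}\ref{thm:main:small-p} is inherently lossy by the factor $\log\abs\FF$.

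\smallskip
\noindent To break through, two routes suggest themselves. \emph{(i)~Transfer.} ``$R$ is realizable with rank $\le r$ over $\overline\FF$'' is an existential sentence over the base ring, so by model-completeness of algebraically closed fields it transfers between $\overline\QQ$ and every field of characteristic~$0$, and between $\overline{\GF(\ell)}$ (for $\ell$ large) and every field of characteristic~$\ell$; one would then apply the counting bound over a \emph{moderate} finite field. The obstruction is quantitative --- the admissible $\ell$ depends on the bit-complexity of an (unknown) realization, hence on~$n$, so one cannot keep $\abs\FF$ polynomial in $1/{p}$. \emph{(ii)~Dimension count.} Realizability with rank $\le r$ is equivalent to the variety $\mathcal R_r\cap L_R$ --- $\mathcal R_r\subseteq\FF^{n\times n}$ the determinantal variety of rank $\le r$, $L_R$ the coordinate subspace of matrices supported in $R$, of dimension $\abs R\approx{p}\binom n2$ --- not lying inside the union of the $\abs R$ coordinate hyperplanes $\{A_{k,\ell}=0\}$, $(k,\ell)\in\supp(R)$. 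Since $\dim\mathcal R_r\approx 2rn$, the expected-dimension heuristic predicts that $\mathcal R_r\cap L_R$ has no component other than $\{0\}$, hence no full-support point, as soon as $r\ll n$. The hard part --- and, I suspect, the reason the conjecture is open --- is to make this rigorous for a \emph{random sparse} $L_R$ rather than a generic subspace: the origin always lies in $\mathcal R_r\cap L_R$ and in every support hyperplane, so one must rule out that the component of $\mathcal R_r\cap L_R$ through $0$ escapes those hyperplanes, \emph{uniformly in the (huge) field}. This is precisely the point where the combinatorial randomness of $R$ must be played off against the algebraic flexibility of large fields, and neither the counting method nor any determinantal-submatrix argument I see reaches it; a field-independent rank bound genuinely beyond $\widetilde O(1/{p})$ in the range $\omega(1/n)\le{p}\le o(1)$ would do as well, but \eqref{eq:combbd} and the hyperplane analysis indicate that no such bound comes from inspecting one submatrix at a time.
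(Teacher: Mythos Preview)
The statement you were asked to address is presented in the paper as a \emph{conjecture}, not a theorem: the authors explicitly leave open the case of slowly decreasing~$p$ over large or infinite fields and offer no proof. There is therefore no ``paper's own proof'' to compare your attempt against.

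Your write-up is, appropriately, not a proof but an honest analysis of the obstacles, and in that capacity it is accurate. You correctly recover the three regimes the paper \emph{does} handle (Theorem~\ref{thm:main}\ref{thm:main:tiny-p}--\ref{thm:main:const-p}): the combinatorial bound~\eqref{eq:combbd} via sparse induced subgraphs covers $p=O(1/n)$ and is essentially the paper's Lemma~\ref{lem:indep-no}, with your $|S|-e(G[S])$ term a genuine if mild strengthening of the independence-number bound; the finite-field counting argument is exactly Theorem~\ref{thm:main}\ref{thm:main:small-p}; and you correctly isolate the open window as $p=\omega(1/n)$, $p=o(1)$, with $|\FF|$ superpolynomial in $1/p$. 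Your hyperplane/orthogonal-representation analysis is sound but, as you yourself observe, saturates at $r=\widetilde\Omega(1/p)$ with no factor of~$n$. The two escape routes you sketch---model-theoretic transfer to a moderate finite field, and a dimension count on the intersection of the determinantal variety with a random coordinate subspace---are reasonable lines of attack, and your identification of the quantitative obstruction to transfer (the finite field needed depends on~$n$) and the genericity obstruction to the dimension count (a random sparse $L_R$ is not a generic subspace, and the origin is always a bad point) matches why the problem is open.

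In short: there is no gap to name in your argument, because you do not claim to have closed one; and there is no paper proof to diverge from, because the paper does not have one. Your survey of the landscape is correct and slightly sharper in places than what the paper records.
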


The bound in Thereom~\ref{thm:main}\ref{thm:main:small-p} is similar to that in~\cite{Golovnev-Regev-Weinstein:minrk:2016}, but it is better by roughly a factor of~$\log n$ if~$p$ is (constant or) slowly decreasing, e.g., $p=1/\log n$.  (Their minrank definition gives a lower bound to fooling-set pattern minimum rank.)

\bigskip\noindent%
The next three sections hold the proofs for Theorem~\ref{thm:main}.
%

\paragraph*{Acknowledgments.}
The second author would like to thank Kaveh Khoshkhah for discussions on the subject.

\section{Proof of Theorem~\ref{thm:main}\ref{thm:main:tiny-p}}
It is quite easy to see (using, e.g., Tur\'an's theorem) that in the region $p=O(1/n)$, $R(n,p)$ contains a triangular submatrix with nonzero diagonal entries of order $\Omega(n)$, thus lower bounding the rank over any field.  Here, we prove the following stronger result, which also gives a lower bound (for arbitrary fields) for more slowly decreasing~$p$.

\begin{lemma}\label{lem:indep-no}
  For $p(n) = d(n)/n = o(1)$, if $d(n) > C$ for some constant~$C$, then zero-nonzero pattern $R(n,p)$ contains a triangular submatrix with nonzero diagonal entries of size
  \begin{equation*}
    \Omega\biggl( \frac{\ln d}{d} \cdot n \biggr).
  \end{equation*}
\end{lemma}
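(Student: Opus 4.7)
The approach is to reduce the question to the classical one of bounding the independence number of a sparse random graph. Given a fooling-set pattern~$R$, I associate the undirected graph $G_R$ on $[n]$ with edge $\{k,\ell\}$ iff at least one of $R_{k,\ell}$, $R_{\ell,k}$ is nonzero; the fooling-set condition forces exactly one of the two entries to equal~$1$. Under the distribution $R(n,p)$, this $G_R$ is distributed as the uniform random graph $G(n,m)$ with $m=\lceil p\binom{n}{2}\rceil$ edges (the choice of \emph{which} of the two symmetric positions carries the~$1$ is additional independent uniform randomness, which plays no role below). If $S\subseteq[n]$ is independent in $G_R$, then $R_{S\times S}$ equals the identity matrix, which is trivially triangular with nonzero diagonal entries. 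Hence it suffices to show $\alpha(G(n,m)) = \Omega(n\ln d/d)$ asymptotically almost surely, with $d := np$.

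This independence-number bound for sparse random graphs is classical (Frieze, 1990; see also Bollob\'as, \emph{Random Graphs}, Ch.~XI). The plan is to derive it via the second moment method: set $k := \lfloor c\,n\ln d/d\rfloor$ for a small constant $c\in(0,2)$ and let $X_k$ count independent $k$-subsets in $G(n,m)$. The first moment $\mathbf{E}[X_k] = \binom{n}{k}(1-p)^{\binom{k}{2}}$ tends to infinity, since
\[
\ln \mathbf{E}[X_k] \;\sim\; k\ln(n/k) - p\binom{k}{2} \;\sim\; \bigl(c - c^2/2\bigr)\,\frac{n(\ln d)^2}{d}.
\]
The variance is computed by summing over ordered pairs $(T,T')$ of $k$-subsets and partitioning by $j := |T\cap T'|$; the standard calculation then shows $\operatorname{Var}(X_k) = o(\mathbf{E}[X_k]^2)$, and Chebyshev's inequality yields $X_k \ge 1$ a.a.s. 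The transfer between the $G(n,p)$ and $G(n,m)$ models is routine.

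The main obstacle is this variance bound, which requires careful estimation of the contribution to $\mathbf{E}[X_k^2]$ from pairs of $k$-subsets with large intersection, showing that the sum over $j$ is dominated by its boundary terms $j\in\{0,1\}$. A lighter alternative, valid in the sparser range $d = o(n^{1/3})$ where $G(n,m)$ has only $O(d^3)$ triangles in expectation, is to delete one vertex per triangle from $G_R$ and invoke Shearer's bound $\alpha(H) \ge |V(H)|\,(d\ln d - d + 1)/(d-1)^2$ for triangle-free graphs $H$ of average degree~$d$; this directly gives $\alpha(G_R) = \Omega(n\ln d/d)$ whenever $d^3 \ll n$, which already covers the regime $p = O(1/n)$ needed for Theorem~\ref{thm:main}\ref{thm:main:tiny-p}.
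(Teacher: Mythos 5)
Your proposal follows essentially the same route as the paper: associate to $R(n,p)$ a random graph on $[n]$ whose independent sets index triangular principal submatrices, then invoke the classical $\Omega(n\ln d/d)$ lower bound on the independence number of sparse random graphs (the paper cites Janson et al., Thm.~7.4; you cite Frieze and Bollob\'as). The only variation is cosmetic and lies in the choice of graph: you put an edge $\{k,\ell\}$ whenever either $R_{k,\ell}$ or $R_{\ell,k}$ is nonzero, which is exactly $G(n,m)$ and yields an identity submatrix, whereas the paper puts an edge only when the below-diagonal entry is nonzero, which is $G(n,m)$ after independent $1/2$-subsampling of the edges and yields only an upper-triangular submatrix -- the paper's sparser graph buys a factor-two-larger independence number, which is immaterial for the $\Omega$-statement, while your graph avoids the extra model-transfer step; your second-moment sketch and Shearer fallback are optional extras not needed once the classical bound is cited.
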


We prove the lemma by using the following theorem about the independence number of random graphs in the Erd\H{o}s-R\'enyi model.  Let $G_{n,q}$ denote the random graph with vertex set~$[n]$ where each edge is chosen independently with probability~$q$.

\begin{theorem}[Theorem~7.4 in \cite{Janson-Luczak-Rucinski:Book}]\label{thm:Frz-indp}%
  Let $\epsilon>0$ be a constant, $q=q(n)$, and define
  \begin{equation*}
    k_{\pm\epsilon}  :=  \Bigl\lfloor \frac{2}{q}( \ln(nq)-\ln\ln(nq)+1-ln 2\pm\epsilon )   \Bigr\rfloor.
  \end{equation*}

  There exists a constant $C_{\epsilon}$ such that for $C_{\epsilon}/n\leq q=q(n)\leq \ln^{-2}n$, a.a.s., the largest independent set in~$G_{n,q}$ has size between $k_{-\epsilon}$ and $k_{+\epsilon}$.
\end{theorem}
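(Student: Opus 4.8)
The plan is to prove the two estimates $\alpha(G_{n,q})\le k_{+\epsilon}$ and $\alpha(G_{n,q})\ge k_{-\epsilon}$ separately, working throughout with the random variable $X_k$, the number of independent sets of size $k$ in $G_{n,q}$, for which $\EE X_k=\binom nk(1-q)^{\binom k2}$. Write $L:=\ln(nq)$; the hypothesis $q\le\ln^{-2}n$ forces $q=o(1)$ and $k_{\pm\epsilon}\to\infty$, while $q\ge C_\epsilon/n$ (with $C_\epsilon$ a large constant to be fixed at the end) makes $nq$, hence $L$, large, so that all the lower-order error terms below are controlled. The upper bound is pure first moment: with $k:=k_{+\epsilon}$, the bounds $\binom nk\le(ne/k)^k$ and $1-q\le e^{-q}$ give
\[
  \EE X_k\;\le\;\exp\!\Bigl(k\bigl(1+\ln n-\ln k-\tfrac{(k-1)q}{2}\bigr)\Bigr),
\]
and the precise shape of $k_{\pm\epsilon}$ is engineered so that, substituting $k\approx\tfrac2q(L-\ln L+1-\ln 2+\epsilon)$, the bracket reduces to $-\epsilon$ plus lower-order terms: the constant $1-\ln 2$ inside $k_{\pm\epsilon}$ absorbs the $+1$ from $(ne/k)^k$ and the $\ln 2$ appearing in $\ln k=\ln(2/q)+\ln L+O(\ln L/L)$, the $L-\ln L$ parts cancel against $\tfrac{(k-1)q}{2}$, and the residual $O(\ln L/L)$ is made $<\epsilon/2$ by taking $C_\epsilon$ large. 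Since $k\to\infty$ this gives $\EE X_{k_{+\epsilon}}\le e^{-\epsilon k/2}\to 0$, so Markov yields $\alpha(G_{n,q})<k_{+\epsilon}$ a.a.s. The mirror computation, with $\binom nk\ge(n/k)^k$ and $k:=k_{-\epsilon}$, makes the bracket $\ge\epsilon/2$, hence $\EE X_{k_{-\epsilon}}\to\infty$.

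For the lower bound one wants to promote $\EE X_k\to\infty$ (with $k:=k_{-\epsilon}$) to $X_k>0$ a.a.s.\ by Chebyshev, using
\[
  \frac{\EE X_k^2}{(\EE X_k)^2}\;=\;\sum_{i=0}^{k}\Pr\!\bigl[\,|S\cap T|=i\,\bigr]\,(1-q)^{-\binom i2},
\]
where $S,T$ are independent uniform $k$-subsets of $[n]$, so $|S\cap T|$ is hypergeometric with mean $k^2/n$, about which it concentrates. When $qk^4/n^2=o(1)$, the inflation factor $(1-q)^{-\binom i2}$ is $1+o(1)$ uniformly over the range in which $|S\cap T|$ lives, so the ratio is $1+o(1)$ and Chebyshev finishes the job. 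Since $k\approx 2L/q$, the condition $qk^4/n^2=o(1)$ amounts to $q$ exceeding $n^{-2/3}$ by at least a fixed power of $\log n$, so this handles essentially the whole range $n^{-2/3}\lesssim q\le\ln^{-2}n$ cleanly.

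The main obstacle is the remaining sparse sub-range, $q$ roughly between $C_\epsilon/n$ and $n^{-2/3}$. Here $k$ is of order $n$, two random large independent sets typically overlap in $\approx k^2/n$ vertices, and at that overlap $(1-q)^{-\binom i2}=e^{\Theta(qk^4/n^2)}=e^{\omega(1)}$, so $\EE X_k^2$ exceeds $(\EE X_k)^2$ by a superpolynomial factor; the plain second moment --- and, since the ratio tends to infinity, Janson's inequality applied to $X_k$ --- breaks down, and this route does not even give $X_k>0$ with probability bounded below. A genuinely more careful argument is needed to pin down $\EE\alpha(G_{n,q})$. The remedy, essentially Frieze's, is a second moment computed after first passing to a sparsified, near-regular subgraph: delete the abnormally high-degree vertices --- a.a.s.\ at most an $e^{-\Theta(\delta^2 nq)}n\le\epsilon'n$ fraction, with $\epsilon'$ as small as one likes once $C_\epsilon$ is large --- and run the computation inside a suitable near-$nq$-bounded-degree model on the surviving $\ge(1-\epsilon')n$ vertices, where the degree cap controls the contribution of pairs of large independent sets with large overlap, so that the ratio now tends to $1$; transferring back costs only replacing $\epsilon$ by $\epsilon+O(\epsilon')$. (If such an argument only delivers $\Pr[\alpha(G_{n,q})\ge k_{-\epsilon/2}]\ge c>0$, one finishes with the vertex-exposure martingale: $\alpha$ changes by at most $1$ when a single vertex's edges are revealed, so Azuma concentrates $\alpha$ about $\EE\alpha$ within $O(\sqrt{n\log n})=o(1/q)$, which upgrades the constant-probability statement to an a.a.s.\ one throughout this range.) Making this sparsified second-moment step rigorous, uniformly down to $q=C_\epsilon/n$, is where the real work sits; everything else is bookkeeping with the estimates above.
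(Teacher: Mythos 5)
The paper does not prove this statement at all: it is quoted verbatim (as Theorem~7.4 of Janson--{\L}uczak--Ruci\'nski, i.e.\ Frieze's theorem on the independence number of sparse random graphs) and used as a black box in the proof of Lemma~\ref{lem:indep-no}. So there is no in-paper argument to compare yours against; what can be judged is whether your sketch would stand on its own as a proof.

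It would not, and you essentially say so yourself. The first-moment upper bound and the plain second-moment lower bound for $q\gtrsim n^{-2/3}\operatorname{polylog}n$ are correct in outline (modulo the usual care with the tail of the overlap sum near $|S\cap T|\approx k$, which you gloss over but which is standard). The genuine gap is the lower bound in the range $C_\epsilon/n\le q\lesssim n^{-2/3}$, which is precisely the content of Frieze's theorem --- the denser range was already covered by Bollob\'as--Erd\H{o}s/Matula-type arguments. There you correctly diagnose why the vanilla second moment fails ($\EE X_k^2/(\EE X_k)^2=e^{\omega(1)}$ at the typical overlap) and you name the right two ingredients (a modified second-moment computation yielding a not-too-small lower bound on $\Pr[\alpha\ge k]$, then Azuma on the vertex-exposure martingale to convert it into an a.a.s.\ statement, exploiting that $\sqrt{n\log n}=o(1/q)$ in this range). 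But the modified second moment is described only as ``delete high-degree vertices and run the computation inside a near-regular model, where the degree cap controls the large-overlap pairs'' --- no estimate is given showing that the ratio actually becomes $1+o(1)$, or even $e^{o(n/d^2)}$ (which is what the Azuma step actually requires), uniformly down to $q=C_\epsilon/n$. That estimate is the entire difficulty of the theorem, and your closing sentence concedes it is missing. As a proof proposal this is therefore an honest and well-oriented roadmap, not a proof; the correct resolution is either to carry out Frieze's overlap analysis in full or to do what the paper does and cite it.
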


\begin{proof}[Proof of Lemma~\ref{lem:indep-no}]
  Construct a graph~$G$ with vertex set $[n]$ from the fooling-set pattern matrix $R(n,p)$ in the following way: There is an edge between vertices $k$ and $\ell$ with $k>\ell$, if and only if $M_{k,\ell} \neq 0$.  This gives a random graph $G=G_{n,m,\nfrac12}$ which is constructed by first drawing uniformly at random a graph from all graphs with vertex set~$[n]$ and exactly~$m$ edges, and then deleting each edge, independently, with probability $\nfrac12$.  Using standard results in random graph theory (e.g., Lemma~1.3 and Theorem~1.4 in~\cite{Frieze-Karonski:rndGBook:2015}), this random graph behaves similarly to the Erd\H{o}s-R\'enyi graph with $q := p/2$.  In particular, since $G_{n,p/2}$ has an independent set of size $\Omega(n)$, so does $G_{n,m,\nfrac12}$.

  It is easy to see that the independent sets in $G$ are just the lower-triangular principal submatrices of $R_{n,p}$.
\end{proof}

As already mentioned, Theorem~\ref{thm:main}\ref{thm:main:tiny-p} is completed by noting that for $p < C/n$, an easy application of Tur\'an's theorem (or ad-hoc methods) gives us an independent set of size $\Omega(n)$.

\section{Proof of Theorem~\ref{thm:main}\ref{thm:main:small-p}}
Let $\FF$ be a finite field with $F:=\abs{\FF}$.  As mentioned in Theorem~\ref{thm:main}, we allow $F=F(n)$ to depend on~$n$.  In this section, we need to bound some quantities away from others, and we do that generously.

Let us say that a \textit{tee shape} is a set $T = I\times[n]\cup[n]\times I$, for some $I\subset[n]$.  A \textit{tee matrix} is a tee shape~$T$ together with a mapping $N\colon T\to\FF$ which satisfies
\begin{equation}\label{eq:small-p:def-fool-tee}
  N_{k,k}=1 \text{ for all $k\in I$, and } \quad N_{k,\ell}N_{\ell,k}=0 \text{ for all $(k,\ell)\in I\times[n]$, $k\ne\ell$.}
\end{equation}
The \textit{order} of the tee shape/matrix is $\abs{I}$, and the \textit{rank} of the tee matrix is the rank of the matrix $N_{I\times I}$.

For a matrix~$M$ and a tee matrix~$N$ with tee shape~$T$, we say that \textit{$M$ contains the tee matrix~$N$,} if $M_T = N$.

\begin{lemma}\label{lem:small-p:uniq-det-by-T}
  Let~$M$ be a matrix with rank $s :=\rk M$, which contains a tee matrix~$N$ of rank $s$.  Then~$M$ is the only matrix of rank~$s$ which contains~$N$.
\end{lemma}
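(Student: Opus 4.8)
The plan is to exploit the rank hypothesis to show that the row space of any rank-$s$ competitor $M'$ with $M'_T = N$ is forced to be a fixed $s$-dimensional subspace $\mathcal R\subseteq\FF^{[n]}$ determined by $N$ alone, and that inside $\mathcal R$ a row is pinned down by its restriction to the columns indexed by $I$ — and that restriction is prescribed by $N$.

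First I would note that $N_{I\times I}$ is a submatrix of the row band $M_{I\times[n]}=N_{I\times[n]}$, which is a submatrix of $M$, so $s=\rk N_{I\times I}\le\rk M_{I\times[n]}\le\rk M=s$ and hence $\rk M_{I\times[n]}=s$. Thus the span of the rows of $M$ indexed by $I$ is an $s$-dimensional subspace of the (also $s$-dimensional) row space of $M$, so the two coincide; call this space $\mathcal R$. Since these rows are entries of $T$, they equal the corresponding rows of $N$, so $\mathcal R$ is a function of $N$ only. Running the same computation for any rank-$s$ matrix $M'$ with $M'_T=N$ (its row band $M'_{I\times[n]}$ again equals $N_{I\times[n]}$) shows that the row space of $M'$ is this very same $\mathcal R$.

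Next I would consider the coordinate projection $\pi\colon\FF^{[n]}\to\FF^{I}$ that forgets all columns outside $I$. Its image on $\mathcal R$ is the span of $\{N_{i,I}:i\in I\}$, i.e.\ the row space of $N_{I\times I}$, which has dimension $\rk N_{I\times I}=s=\dim\mathcal R$; a surjective linear map between vector spaces of the same finite dimension is injective, so $\pi$ restricted to $\mathcal R$ is injective. Now fix any row index $k$. Both $M_{k,\cdot}$ and $M'_{k,\cdot}$ lie in $\mathcal R$, and since $(k,\ell)\in[n]\times I\subseteq T$ for every $\ell\in I$, both have the same image $N_{k,I}$ under $\pi$; injectivity of $\pi|_{\mathcal R}$ forces $M_{k,\cdot}=M'_{k,\cdot}$. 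As $k$ was arbitrary, $M=M'$.

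I do not expect a genuine obstacle here — the statement is purely linear-algebraic and the fooling-type constraints \eqref{eq:small-p:def-fool-tee} on $N$ play no role. The only point needing care is making the two rank comparisons (for $M$ and for $M'$) pin the row space down to the \emph{same} subspace $\mathcal R$: this is exactly why it matters that the whole row band $I\times[n]$ of $T$ is prescribed, not merely the diagonal block $I\times I$. Likewise, the column band $[n]\times I$ of $T$ is what supplies the determined projection $N_{k,I}$ for the rows $k\notin I$. Both facts are immediate from $T=I\times[n]\cup[n]\times I$.
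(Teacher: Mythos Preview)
Your proof is correct and follows essentially the same approach as the paper: the row band $N_{I\times[n]}$ pins down the row space of any rank-$s$ competitor, and the column band $N_{[n]\times I}$ then pins down each individual row via the injective restriction to the $I$-columns. The only cosmetic difference is that the paper first extracts subsets $I_1,I_2\subseteq I$ of size exactly~$s$ with $\rk M_{I_1\times I_2}=s$ and argues with the smaller tee $T'=I_1\times[n]\cup[n]\times I_2$, whereas you work directly with all of~$I$; neither choice affects the argument.
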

In other words, the entries outside of the tee shape are uniquely determined by the entries inside the tee shape.
\begin{proof}
  Let $T = I\times[n]\cup[n]\times I$ be the tee shape of a tee matrix~$N$ contained in~$M$.

  Since $N_{I\times I} = M_{I\times I}$ and $\rk N_{I\times I} = s = \rk M$, there is a row set $I_1 \subseteq I$ of size $s=\rk M$ and a column set $I_2 \subseteq I$ of size~$s$ such that $\rk M_{I_1\times I_2} = s$.  This implies that~$M$ is uniquely determined, among the matrices of rank~$s$, by $M_{T'}$ with $T' := I_1\times[n]\cup[n]\times I_2 \subseteq T$.  (Indeed, since the rows of $M_{I_1\times [n]}$ are linearly independent and span the row space of~$M$, every row in $M$ is a unique linear combination of the rows in $M_{I_1\times [n]}$; since the rows in $M_{I_1\times I_2}$ are linearly independent, this linear combination is uniquely determined by the rows of $M_{[n]\times I_2}$.)

  Hence, $M$ is the only matrix $M'$ with $\rk M'=s$ and $M'_{T'} = M_{T'}$.
  Trivially, then, $M$ is the only matrix $M'$ with $\rk M'=s$ and $M'_{T} = M_{T} = N$.
\end{proof}

\begin{lemma}\label{lem:small-p:count-tee-matrices}
  For $r \le n/5$ and $m \le 2r(n-r)/3$, there are at most
  \begin{equation*}
    O(1) \cdot \binom{n}{2r} \cdot \binom{2r(n-r)}{m} \cdot (2F)^m
  \end{equation*}
  matrices of rank at most~$r$ over~$\FF$ which contain a tee matrix of order $2r$ with at most~$m$ nonzeros.
\end{lemma}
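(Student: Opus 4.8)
The plan is to count, in order, the ingredients that determine such a matrix~$M$, using Lemma~\ref{lem:small-p:uniq-det-by-T} to turn ``matrix of rank $\le r$ containing a tee matrix~$N$ of order $2r$ with $\le m$ nonzeros'' into ``the tee matrix~$N$ together with a tiny amount of extra bookkeeping.'' First I would choose the index set~$I$ with $|I| = 2r$; this contributes the factor $\binom{n}{2r}$. Next, having fixed the tee shape $T = I\times[n]\cup[n]\times I$, I would choose which entries of~$T$ are nonzero: the tee shape has $2\cdot 2r\cdot n - (2r)^2 = 2r(2n-2r) \le 2\cdot 2r(n-r) \cdot$ (a constant) positions, but actually one should be slightly careful — the relevant bound is that the number of off-diagonal positions in the tee shape is at most $4rn$, which is comfortably at most a constant times $2r(n-r)$ when $r \le n/5$ (so $n - r \ge 4n/5$, giving $2r(n-r) \ge 8rn/5$, hence $4rn \le \tfrac{5}{2}\cdot 2r(n-r)$). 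Since $m \le 2r(n-r)/3$ is at most half the number of available positions, the binomial $\binom{(\text{positions})}{\le m}$ is dominated (up to the $O(1)$ factor absorbing the sum over ``$\le m$'' and the constant blow-up in the number of positions) by $\binom{2r(n-r)}{m}$; this is a routine monotonicity-of-binomials estimate. Finally, for each chosen nonzero position I would assign a value in $\FF\setminus\{0\}$, contributing at most $(F-1)^m \le F^m$, and I still need the remaining factor $2^m$ — which I would spend on encoding, for each of the $\le r$ pairs $\{k,\ell\}$ with $k,\ell\in I$, a ``which of $N_{k,\ell}, N_{\ell,k}$ is the nonzero one'' bit; since there are at most $m$ nonzeros total this costs at most $2^m$. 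Multiplying, $M$ is specified by data lying in a set of size $O(1)\cdot\binom{n}{2r}\cdot\binom{2r(n-r)}{m}\cdot(2F)^m$, and by Lemma~\ref{lem:small-p:uniq-det-by-T} distinct matrices~$M$ arise from distinct such data (the tee matrix determines~$M$ uniquely among matrices of rank $= \rk M \le r$; one must note that if $\rk M < r$ one can still apply the lemma with $s = \rk M$, and the tee matrix then still pins down~$M$). Hence the count follows.

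The slightly delicate points are two. First, Lemma~\ref{lem:small-p:uniq-det-by-T} as stated requires the contained tee matrix to have rank \emph{equal} to $\rk M$; here we are given a matrix of rank $\le r$ containing \emph{some} tee matrix of order $2r$, but we are \emph{not} told that tee matrix has full rank $\rk M$. I would handle this by observing that $N_{I\times I} = M_{I\times I}$, so $\rk N_{I\times I} \le \rk M$ automatically, and — here is the key — for the counting upper bound we do not actually need the tee matrix we enumerate to have rank exactly $\rk M$; we only need that \emph{each} matrix $M$ in our family is, for \emph{some} choice of $I$ (namely one for which $M_{I\times I}$ attains $\rk M$, which exists because $M$ has rank $\le r \le 2r$ so some $\rk M \times \rk M$ submatrix inside any spanning set of $2r$ rows/columns is invertible — more carefully, pick a maximal independent set of $\le r$ rows and extend to $2r$ rows, similarly columns), counted by a tee matrix of full rank $\rk M$, to which Lemma~\ref{lem:small-p:uniq-det-by-T} applies. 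So the enumeration ``tee matrices of order $2r$ and rank $= \rk M$, for $M$ in the family'' already covers every $M$, and the crude count of \emph{all} tee data of order $2r$ with $\le m$ nonzeros is an upper bound for that. Second, one must make sure the hypothesis $m \le 2r(n-r)/3$ is genuinely used to dominate $\binom{(\text{tee positions})}{\le m}$ by $O(1)\binom{2r(n-r)}{m}$: since the number of tee positions is $\Theta(rn) = \Theta(r(n-r))$ in this range and $m$ is a constant factor below it, $\sum_{j\le m}\binom{\Theta(r(n-r))}{j} = O(1)\binom{\Theta(r(n-r))}{m}$, and rescaling the top of the binomial by a constant costs only a $c^m$ factor, which I would absorb into the $(2F)^m$ by noting we have slack (the ``$2$'' in $2F$ can instead be any fixed constant $\ge$ that rescaling factor, or one simply tracks a larger absolute constant — the statement as written has a clean ``$2$'' so I would be careful to verify the constants actually close, perhaps by being less wasteful in estimating the number of tee positions as $\le 2\cdot 2r\cdot n \le 4rn$ and using $n \le \tfrac{5}{4}(n-r)$).

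I expect the main obstacle to be purely bookkeeping: getting the constants to match the clean form $(2F)^m$ rather than $(cF)^m$ for an unspecified $c$, and making the reduction to Lemma~\ref{lem:small-p:uniq-det-by-T} airtight despite that lemma's rank-equality hypothesis. Neither is conceptually hard; the content of the lemma is entirely the three-step ``choose $I$ / choose nonzero pattern / choose values-and-orientations'' decomposition plus the uniqueness input. If the constants turn out not to close with the naive estimate of $\le 4rn$ tee positions, the fix is to use $r \le n/5 \Rightarrow n - r \ge \tfrac45 n$, so $2r(n-r) \ge \tfrac85 rn \ge \tfrac{2}{5}\cdot 4rn$, giving $\binom{4rn}{m} \le \binom{(5/2)\cdot 2r(n-r)}{m} \le (5/2)^m e^m \binom{2r(n-r)}{m} \le 7^m\binom{2r(n-r)}{m}$ via $\binom{\alpha N}{m}\le (e\alpha)^m\binom{N}{m}$ for $\alpha\ge 1$ — which is too big for the ``$2$'' but can be absorbed by allowing the final bound's constant to be, say, $(14F)^m$; since the lemma will only be applied with the right-hand side fed into a crude union bound, any fixed constant in place of ``$2$'' is harmless, and I would simply note this or, if fidelity to the exact ``$2$'' is wanted, sharpen by a more careful count of tee positions (excluding the diagonal and the $I\times I$ overlap: the number of \emph{off-diagonal} positions in $T$ is $2(2r)(n - 2r) + ((2r)^2 - 2r) = 4r(n-2r) + 2r(2r-1) \le 4rn$, and one cannot do much better, so the cleanest honest statement carries a constant $>2$; I would flag this in the write-up).
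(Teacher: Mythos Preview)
Your approach is essentially the paper's, and your handling of the rank-equality hypothesis in Lemma~\ref{lem:small-p:uniq-det-by-T} is in fact more careful than the paper's own proof (which applies that lemma without comment; this is fine because in the application the tee matrix comes from Lemma~\ref{lem:small-p:ex-tee-matrix} and hence has rank $\rk M$). Where you and the paper diverge is in the bookkeeping step: the paper counts the support of the tee matrix by \emph{opposite pairs} $\{(k,\ell),(\ell,k)\}$ rather than by individual positions. Because of the tee-matrix constraint $N_{k,\ell}N_{\ell,k}=0$, each off-diagonal nonzero occupies exactly one such pair, and the number of pairs in the tee shape is
\[
\binom{2r}{2}+2r(n-2r)\;\le\;2r(n-r),
\]
which is \emph{exactly} the quantity inside the target binomial. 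One then chooses $h\le m$ pairs, a side for each ($2^h$), and a nonzero value for each ($(F-1)^h$); summing over $h$ using $m\le \tfrac13\cdot 2r(n-r)$ gives the clean $O(1)\binom{2r(n-r)}{m}(2F)^m$ with no fudging of constants.

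Your position-based count does give a valid upper bound, but your justification for the factor $2^m$ is not right: once you have chosen the \emph{positions} of the nonzeros, you have already determined which side of each pair is nonzero, so there is no further ``side'' bit left to spend---this is precisely why your constants refuse to close at~$2$. Switching from positions to opposite pairs fixes the constant and removes the muddle in one stroke.
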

\begin{proof}
  By the Lemma~\ref{lem:small-p:uniq-det-by-T}, the number of these matrices is upper bounded by the number of tee matrices (of all ranks) of order~$2r$ with at most~$k$ nonzeros.

  The tee shape is uniquely determined by the set~$I\subseteq [n]$.  Hence, the number of tee shapes of order $2r$ is
  \begin{equation}\label{eq:small-p:owinccfwino}\tag{$*$}
    \binom{n}{2r}.
  \end{equation}

  The number of ways to choose the support a tee matrix.   Suppose that the tee matrix has~$h$ nonzeros.  Due to~\eqref{eq:small-p:def-fool-tee}, $h$ nonzeros must be chosen from $\binom{2r}{2}+2r(n-2r) \le 2r(n-r)$ opposite pairs.  Since $h < 2r(n-r)/2$, we upper bound this by
  \begin{equation*}
    \binom{ 2r(n-r) }{ h }.
  \end{equation*}
  For each of the~$h$ opposite pairs, we have to pick one side, which gives a factor of $2^h$.  Finally, picking, a number in $\FF$ for each of the entries designated as nonzero gives a factor of $(F-1)^h$.

  For summing over $h=0,\dots,m$, first of all, remember that $\sum_{i=0}^{(1-\eps)j/2}\binom{j}{i} = O_\eps(1)\cdot\binom{j}{(1-\eps)j/2}$ (e.g., Theorem 1.1 in~\cite{BollobasBkRndGraphs}, with $p=\nfrac12$, $u:=1+\eps$).  Since $m \le 2r(n-r)/3$, we conclude
  \begin{equation*}
    \sum_{h=0}^m \binom{ 2r(n-r) }{ h } = O(1)\cdot \binom{2r(n-r)}{m}
  \end{equation*}
  (with an absolute constant in the big-Oh).  Hence, we find that the number of tee matrices (with fixed tee shape) is at most
  \begin{equation*}
    \sum_{h=0}^m \binom{ 2r(n-r) }{ h } 2^h (F-1)^h
    \le (2F)^m \sum_{h=0}^m \binom{ 2r(n-r) }{ h } = O(1)\cdot (2F)^m \cdot \binom{2r(n-r)}{m}.
  \end{equation*}

  Multiplying by~\eqref{eq:small-p:owinccfwino}, the statement of the lemma follows.
\end{proof}

\begin{lemma}\label{lem:small-p:ex-tee-matrix}
  Let $r\le n/5$.  Every matrix~$M$ of rank at most~$r$ contains a tee matrix of order $2r$ and rank $\rk M$.
\end{lemma}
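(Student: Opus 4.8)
The plan is to reduce the lemma to a one-line fact about the rank of principal submatrices, after disposing of a normalization. First I would normalize the diagonal: since $M$ is a fooling-set matrix, $M_{k,k}\ne 0$ for every $k$, so dividing each row $k$ by $M_{k,k}$ — an operation changing neither the rank nor the zero-nonzero pattern of $M$, hence preserving~\eqref{eq:small-p:def-fool-tee} — lets us assume $M_{k,k}=1$ for all $k$. With this in hand, observe that for \emph{any} index set $I\subseteq[n]$ the restriction $N:=M_{T}$ to the tee shape $T:=I\times[n]\cup[n]\times I$ is automatically a tee matrix of order $|I|$: the requirements $N_{k,k}=1$ and $N_{k,\ell}N_{\ell,k}=0$ in~\eqref{eq:small-p:def-fool-tee} are simply inherited entrywise from $M$. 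Thus the entire content of the lemma is the existence of a set $I$ with $|I|=2r$ and $\rk M_{I\times I}=\rk M$; given such an $I$, the tee matrix $N:=M_T$ has order $2r$, rank $\rk M_{I\times I}=\rk M$, and is contained in $M$ by construction.

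It then remains to produce such an $I$, and here only standard linear algebra is needed. Writing $s:=\rk M\le r$, I would first extract a full-rank $s\times s$ submatrix $M_{R_0\times C_0}$ with $|R_0|=|C_0|=s$ (take $s$ rows of $M$ spanning its row space, then $s$ linearly independent columns of the resulting $s\times n$ matrix). Setting $I_0:=R_0\cup C_0$ gives $|I_0|\le 2s\le 2r\le n$, and since $M_{R_0\times C_0}$ is a submatrix of $M_{I_0\times I_0}$, which is a submatrix of $M$, monotonicity of rank under enlarging the index set forces $s=\rk M_{R_0\times C_0}\le \rk M_{I_0\times I_0}\le \rk M=s$, i.e.\ $\rk M_{I_0\times I_0}=s$. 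Finally I would pad $I_0$ with arbitrary additional indices up to $|I|=2r$ — possible because $2r\le n$ (this is the only use of the hypothesis $r\le n/5$, and $r\le n/2$ would already suffice) — and apply the same monotonicity once more: $s=\rk M_{I_0\times I_0}\le\rk M_{I\times I}\le\rk M=s$, so $\rk M_{I\times I}=s=\rk M$, as required.

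I do not expect any genuine obstacle: the statement is essentially a two-line linear-algebra fact once one notices that the tee/fooling-set conditions pass freely between $M$ and its tee-shaped restrictions. The only points meriting a word of care are the diagonal normalization (so that $N_{k,k}=1$ holds on the nose) and the repeatedly used, but trivial, monotonicity of $\rk M_{I'\times I'}$ in $I'$ together with the cap at $\rk M$. If one preferred to avoid the normalization step, one could instead build $I$ by the greedy/exchange argument already used in the proof of Lemma~\ref{lem:small-p:uniq-det-by-T}, but the submatrix-padding route above is shorter.
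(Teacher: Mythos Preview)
Your argument is correct and follows essentially the same route as the paper: pick row and column index sets $I_1,I_2$ of size $s=\rk M$ with $\rk M_{I_1\times I_2}=s$, enlarge $I_1\cup I_2$ to any $I$ of size $2r$ (possible since $2r\le n$), and take $N:=M_T$. The only difference is that you add an explicit diagonal-normalization step and spell out the rank-monotonicity sandwich, whereas the paper applies the lemma only to matrices that are already regular fooling-set matrices and so omits both verifications; your observation that $r\le n/2$ would suffice is also correct.
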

\begin{proof}
  There is a row set $I_1$ of size $s:=\rk M$ and a column set $I_2$ of size~$s$ such that $\rk M_{I_1\times I_2} = s$.  Take~$I$ be an arbitrary set of size $2r$ containing $I_1\cup I_2$, and $T:= I\times[n]\cup[n]\times I$.  Clearly, $M$ contains the tee matrix $N := M_T$, which is of order $2r$ and rank $s=\rk M$.
\end{proof}

\begin{lemma}\label{lem:small-p:no-dense-tee}
  Let $100\max(1,\ln\ln F)/n \le p \le 1$, and $n/(1000(\max(1,\ln F)) \le r \le n/100$.  A.a.s., every tee shape of order $2r$ contained in the random matrix $R(n,p)$ has fewer than $15pr(n-r)$ nonzeros.
\end{lemma}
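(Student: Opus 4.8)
The plan is to fix one tee shape, bound the number of nonzeros of $R(n,p)$ inside it by a hypergeometric tail estimate, and then take a union bound over the $\binom{n}{2r}$ tee shapes of order $2r$.

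I would first reformulate the model. Writing $m:=\lceil p\binom n2\rceil$, the pattern $R(n,p)$ can be generated by choosing a uniformly random $m$-element set of the $\binom n2$ opposite pairs --- the ``on'' pairs --- and then orienting each of them uniformly and independently; note $pn\ge 100$, hence $prn\ge 100r$. Now fix $I\subseteq[n]$ with $\abs I=2r$ and put $T=I\times[n]\cup[n]\times I$. An opposite pair $\{(k,\ell),(\ell,k)\}$ has \emph{both} entries in $T$ if $\{k,\ell\}\cap I\ne\emptyset$ and \emph{neither} entry in $T$ otherwise; hence the number of nonzeros of $R(n,p)$ inside $T$ equals $2r$ (the diagonal of $I$) plus $Y_T$, where $Y_T$ counts the on-pairs meeting $I$. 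Since the number of pairs meeting $I$ is $N_T=r(2n-2r-1)<2rn$, the variable $Y_T$ is hypergeometric with mean $\mu:=mN_T/\binom n2<2prn+4r/(n-1)<2.01\,prn$ (using $r\le n/100$ and $prn\ge100$). As $2r\le 0.02\,prn$ and $n-r\ge 0.99\,n$, the event ``$T$ contains at least $15pr(n-r)$ nonzeros'' is contained in the event $\{Y_T\ge 14.8\,prn\}$.

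Next I would invoke the Chernoff bound for the upper tail of a hypergeometric random variable (it is dominated by that of a binomial with the same mean): with $a:=14.8\,prn$ one has $a/(e\mu)>2.7$, so $\Prb[Y_T\ge a]\le(e\mu/a)^a<e^{-14.6\,prn}$. Union-bounding over the $\binom n{2r}\le(en/2r)^{2r}$ choices of $I$, and using the hypothesis $r\ge n/(1000\max(1,\ln F))$ to get $en/2r\le 500e\max(1,\ln F)$, so that $\ln\binom n{2r}\le 2r(1+\ln 500)+2r\ln\max(1,\ln F)$ with $\ln\max(1,\ln F)\le\max(1,\ln\ln F)$, together with the hypothesis $prn\ge 100\,r\max(1,\ln\ln F)$, I find that the probability that some tee shape of order $2r$ contains at least $15pr(n-r)$ nonzeros is at most $\exp\bigl(2r(1+\ln 500)-14.6\cdot100\,r\max(1,\ln\ln F)\bigr)\le\exp\bigl(-c\,r\max(1,\ln\ln F)\bigr)$ for a large absolute constant~$c$.

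It remains to check that $r\max(1,\ln\ln F)\to\infty$. Since $r\ge\max\bigl(1,\,n/(1000\,t)\bigr)$ and $\max(1,\ln\ln F)=\max(1,\ln t)$ with $t:=\max(1,\ln F)$, a short case split on the size of $t$ --- $t\le e$, then $e<t\le n/1000$ (using that $s\mapsto(\ln s)/s$ is decreasing for $s>e$), then $t>n/1000$ (where $r\ge1$ already suffices) --- shows $r\max(1,\ln\ln F)\ge\ln n-O(1)$ in every case. Hence the per-$r$ bound is $n^{-\Omega(1)}$ with a huge hidden constant, and summing the (geometrically decaying) bound over the at most $n$ admissible values of $r$ still gives $o(1)$. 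I expect this last verification to be the only delicate point: the bookkeeping with the nested $\max(1,\cdot)$'s has to go through uniformly in how the field size $F=F(n)$ grows --- in particular in the extreme regime where $\ln F$ is so large that $r$ is forced down to $1$, which is precisely where the $\ln\ln F$ in the hypothesis on $p$ earns its keep --- and one must confirm that the generous constants $15$, $100$, $1000$ in the statement really do leave enough slack.
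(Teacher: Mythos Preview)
Your proposal is correct and follows essentially the same route as the paper: a hypergeometric upper-tail bound for the number of ``on'' pairs meeting a fixed tee shape, followed by a union bound over the $\binom{n}{2r}$ tee shapes. The paper quotes the tail bound in the form ``$\Pr[Y\ge x]\le e^{-x}$ for $x\ge 7\lambda$'' from Janson--\L uczak--Ruci\'nski, while you use the equivalent $(e\mu/a)^a$ form; both give exponent $\Theta(prn)$, which dominates $\ln\binom{n}{2r}$ by the stated inequalities on $p$ and~$r$.

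Two remarks. First, your final case split verifying $r\max(1,\ln\ln F)\to\infty$ is a genuine addition: the paper stops at ``at most $e^{-\Omega(r)}$'' without checking that this is $o(1)$, and under the lemma's hypotheses $r$ need not tend to infinity when $\ln F$ grows with~$n$; your observation that the exponent is really $\Omega\bigl(r\max(1,\ln\ln F)\bigr)\ge \Omega(\ln n)$ is what closes this. Second, the summation over all admissible~$r$ at the very end is unnecessary --- the lemma is for a single value of~$r$ --- though it does no harm.
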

\begin{proof}
  We take the standard Chernoff-like bound for the hypergeometric distribution of the intersection of uniformly random $p\binom{n}{2}$-element subset (the diagonally opposite pairs of $R(n,p)$ which contain a 1-entry) of a $\binom{n}{2}$-element ground set (the total number of diagonally opposite pairs) with a fixed $2r(n-r)$-element subset (the opposite pairs in~$T$) of the ground set:\footnote{Specifically, we use Theorem 2.10 applied to (2.11) in~\cite{Janson-Luczak-Rucinski:Book}}  With $\lambda := p2r(n-r)$ (the expected size of the intersection), if $x\ge 7\lambda$, the probability that the intersection has at least~$x$ elements is at most $e^{-x}$.

  Hence, the probability that the support of a fixed tee shape of order $2r$ is greater than than $15pr(n-r) \ge 14pr(n-r)+r$ is at most
  \begin{equation*}
    e^{-14pr(n-r)}
    \le
    e^{-r\cdot 14\cdot 99 \cdot \max(1,\ln\ln F)}
    \le
    e^{-r \cdot 1000 \cdot \max(1,\ln\ln F))}
  \end{equation*}

  Since the number of tee shapes is
  \begin{equation*}
    \binom{n}{r}
    \le e^{r(1+\ln(n/r))}
    \le e^{r(11+\ln\max(1,\ln F))},
    \le e^{r(11+\max(1,\ln\ln F))}
  \end{equation*}
  we conclude that the probability that a dense tee shape exists in $R(n,p)$ is at most $e^{-\Omega(r)}$.
\end{proof}

We are now ready for the main proof.
\begin{proof}[Proof of Theorem~\ref{thm:main}\ref{thm:main:small-p}]
  Call a fooling-set matrix~$M$ \textit{regular,} if $M_{k,k}=1$ for all~$k$.  The minimum rank over a fooling-set pattern is always attained by a regular matrix (divide every row by the corresponding diagonal element).

  Consider the event that there is a regular matrix~$M$ over~$\FF$ with $\sigma(M)=R(n,p)$, and $\rk M \le r := n/(2000\ln F)$.  By Lemma~\ref{lem:small-p:ex-tee-matrix}, $M$ contains a tee matrix~$N$ of order $2r$ and rank $\rk M$.  If the size of the support of~$N$ is larger than $15pr(n-r)$, then we are in the situation of Lemma~\ref{lem:small-p:no-dense-tee}.

  Otherwise, $M$ is one of the
  \begin{equation*}
    O(1) \cdot \binom{n}{2r} \cdot \binom{2r(n-r)}{15pr(n-r)} \cdot (2F)^{15pr(n-r)}
  \end{equation*}
  matrices of Lemma~\ref{lem:small-p:count-tee-matrices}.

  Hence, the probability of said event is $o(1)$ (from Lemma~\ref{lem:small-p:no-dense-tee}) plus at most an $O(1)$ factor of the following (with $m := pn^2/2$ and $\rho := r/n$) a constant
  \begin{multline*}
    \frac%
    {\displaystyle   \binom{n}{2r} \cdot \binom{2r(n-r)}{15pr(n-r)} \cdot (2F)^{15pr(n-r)}   }%
    {\displaystyle     \binom{\binom{n}{2}}{p\binom{n}{2}}  2^{p\binom{n}{2}} 2^{-O(pn)}     }
    =
    \frac%
    {\displaystyle   \binom{n}{2r} \cdot \binom{2r(n-r)}{15pr(n-r)} \cdot (2F)^{15pr(n-r)}   }%
    {\displaystyle     \binom{n^2/2}{pn^2/2}  2^{pn^2/2-O(pn)}                               }
    \\
    =
    \frac%
    {\displaystyle   \binom{n}{2\rho n} \cdot \binom{4\rho(1-\rho) n^2/2}{30p\rho(1-\rho)n^2/2} \cdot (2F)^{30p\rho(1-\rho)n^2/2}   }%
    {\displaystyle     \binom{n^2/2}{pn^2/2}  2^{pn^2/2-O(pn)}                                     }
    \\
    =
    \frac%
    {\displaystyle   \binom{n}{2\rho n} \cdot \binom{4\rho(1-\rho) \; n^2/2}{30\rho(1-\rho)\; pn^2/2} \cdot (2F)^{30\rho(1-\rho)\; pn^2/2}   }%
    {\displaystyle     \binom{n^2/2}{pn^2/2}  2^{pn^2/2-O(pn)}                                                                                     }
    =: Q
  \end{multline*}
  Abbreviating $\alpha := 30\rho(1-\rho)<30\rho$, denoting $H(t) := -t\ln t-(1-t)\ln(1-t)$, and using
  \begin{equation}\label{eq:binom-entropy:oinwfe}
    \binom{a}{t a} = \Theta\Bigl(  (ta)^{-\nfrac12}  \Bigr) e^{H(t)a},  \text{ for $t\le\nfrac12$}
  \end{equation}
  (for $a$ large, ``$\le$'' holds instead of ``$=\Theta$''), we find (the $O(pn)$ exponent comes from replacing $\binom{n}{2}$ by $n^2/2$ in the denominator)
  \begin{align*}
    \frac%
    {\displaystyle \binom{n}{2\rho n} 2^{30\rho(1-\rho)\; pn^2/2}      }%
    {\displaystyle                    2^{pn^2/2-O(pn)}                 }
    &\le
    e^{H(1/2\rho) n - (\ln 2) (1-\alpha)pn^2/3 }
    \\
    &\le
    e^{H(1/2\rho) n - (\ln 2) (1-\alpha)pn^2/3 }
    \\
    &=
    e^{n\bigl(  H(1/2\rho) - (\ln 2) (1-\alpha)pn/3 \bigr)}
    \\
    &\le
    e^{n\bigl(  H(1/2\rho) - (\ln 2) 33(1-30\rho) \bigr)}
    \\
    &=o(1),
  \end{align*}
  as $pn/2 \ge 30$ and $1-\alpha > 1-30\rho$, and the expression in the parentheses is negative for all $\rho\in[0, \, 3/100]$.

  For the rest of the fraction~$Q$ above, using~\eqref{eq:binom-entropy:oinwfe} again, we simplify
  \begin{equation*}
    \frac%
    {\displaystyle   \binom{4\rho(1-\rho) \; n^2/2}{30\rho(1-\rho)\; pn^2/2}  F^{30\rho(1-\rho)\; pn^2/2} }%
    {\displaystyle     \binom{n^2/2}{pn^2/2}                                  }
    \le
    \frac%
    {\displaystyle   \binom{\alpha \, n^2/2}{\alpha\, pn^2/2}    F^{30\rho(1-\rho)\; pn^2/2} }%
    {\displaystyle     \binom{n^2/2}{pn^2/2}                                                 }
    =
    O(1) \cdot
    e^{n^2/2 \cdot\bigl( (\alpha-1)H(p) + p\alpha\ln F \bigr)}.
  \end{equation*}
  Setting the expression in the parentheses to~$0$ and solving for~$\rho$, we find
  \begin{equation*}
    \alpha \ge \frac{\ln(1/p)}{\ln(1/p) + \ln F}
  \end{equation*}
  suffices for $Q = o(1)$; as $\alpha \le \rho$, the same inequality with $\alpha$ replaced by~$\rho$ is sufficient.
  This completes the proof of the theorem.
\end{proof}

\section{Proof of Theorem~\ref{thm:main}\ref{thm:main:const-p}}\label{sec:proof-const-p}
In this section, following the idea of~\cite{Hall-Hogben-Martin-Shader:exp-min-rk:2010}, we apply a theorem of Ronyai, Babai, and Ganapathy~\cite{Ronyai-Babai-Ganapathy:zero:2001} on the maximum number of zero-patterns of polynomials, which we now describe.

Let $f=(f_j)_{j=1,\dots,h}$ be an $h$-tuple of polynomials in~$n$ variables $x=(x_1,x_2,\cdots,x_n)$ over an arbitrary field~$\FF$.  In line with the definitions above, for $u\in \FF^n$, the zero-nonzero pattern of $f$ at~$u$ is the vector $\sigma(f(u)) \in \{0,1\}^h$.  

\begin{theorem}[\cite{Ronyai-Babai-Ganapathy:zero:2001}]\label{thm:RoBaGa}
  If 
  $h\ge n$ and each $f_j$ has degree at most $d$
  then, for all~$m$, the set
  \begin{equation*}
    \bbabs{  \Bigl\{    y \in \{0,1\}^h   \Bigm|  \abs{y} \le m   \text{ and }   y = \sigma(f(u)) \text{ for some~$u\in \FF^n$} \Bigr\}  } \le \binom{n+md}{n}.
  \end{equation*}
  In other words, the number of zero-nonzero patterns with Hamming weight at most~$m$ is at most $\binom{n+md}{n}$.
\end{theorem}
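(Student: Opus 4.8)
The plan is to reproduce the linear-algebra argument of Rónyai, Babai, and Ganapathy: associate to each realized pattern a single polynomial, show these polynomials are linearly independent, and bound their number by the dimension of a space of bounded-degree polynomials. Concretely, for every zero-nonzero pattern $y\in\{0,1\}^h$ with $\abs{y}\le m$ that is actually realized, I would first fix a witness point $u_y\in\FF^n$ with $\sigma(f(u_y))=y$, and write $S_y:=\{j:y_j=1\}$ for its support, so $\abs{S_y}\le m$ and $f_j(u_y)\ne 0$ precisely for $j\in S_y$. Then associate to $y$ the product polynomial
\[
  g_y := \prod_{j\in S_y} f_j \in \FF[x_1,\dots,x_n],
\]
the empty product being the constant~$1$. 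Since $\abs{S_y}\le m$ and each factor has degree at most~$d$, we have $\deg g_y\le md$.

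The crucial step is to show that the family $(g_y)_y$, indexed by the realized patterns of Hamming weight at most~$m$, is linearly independent over~$\FF$. Suppose $\sum_y c_y g_y = 0$ with some $c_y\ne 0$, and among all $y$ with $c_y\ne 0$ pick one, $y^{\ast}$, whose support $S_{y^{\ast}}$ is minimal with respect to inclusion. Evaluating at $u_{y^{\ast}}$, observe that $g_y(u_{y^{\ast}})=\prod_{j\in S_y}f_j(u_{y^{\ast}})$ is nonzero if and only if $S_y\subseteq S_{y^{\ast}}$, because $S_{y^{\ast}}$ is exactly the set of coordinates on which the $f_j$ do not vanish at $u_{y^{\ast}}$. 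For a $y$ with $c_y\ne 0$ and $S_y\subseteq S_{y^{\ast}}$, minimality forces $S_y=S_{y^{\ast}}$, and since a pattern is determined by its support this means $y=y^{\ast}$. Hence evaluation at $u_{y^{\ast}}$ annihilates every term except $y=y^{\ast}$, leaving $c_{y^{\ast}}\,g_{y^{\ast}}(u_{y^{\ast}})=0$, which is impossible since $g_{y^{\ast}}(u_{y^{\ast}})\ne 0$ and $c_{y^{\ast}}\ne 0$. The only genuinely non-routine point is this claim, and the right move is the inclusion-minimal choice of $y^{\ast}$; once that is in place the evaluation argument is forced.

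To finish, note that all the $g_y$ lie in the $\FF$-vector space of polynomials in $n$ variables of degree at most~$md$, whose dimension is the number of monomials $x_1^{a_1}\cdots x_n^{a_n}$ with $a_1+\dots+a_n\le md$, namely $\binom{n+md}{n}$. A linearly independent family cannot exceed the dimension, so the number of realized patterns of weight at most~$m$ is at most $\binom{n+md}{n}$, which is the asserted bound. (The hypothesis $h\ge n$ is not needed for this weighted estimate; it is simply the regime in which the unweighted case $m=h$ is of interest, and I would keep it only to match the statement as cited.)
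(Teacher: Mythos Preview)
Your proof is correct and essentially identical to the paper's: both fix witnesses $u_y$, form the product polynomials $g_y=\prod_{j:y_j=1}f_j$ of degree at most~$md$, show they are linearly independent via the key observation that $g_y(u_z)\ne 0$ iff $S_y\subseteq S_z$, and then bound their number by $\dim\FF[x_1,\dots,x_n]_{\le md}=\binom{n+md}{n}$. The only cosmetic difference is that the paper packages the independence step as ``the evaluation matrix $(g_y(u_z))_{y,z}$ is triangular under the inclusion order, hence nonsingular,'' whereas you argue directly by choosing an inclusion-minimal support among the nonzero coefficients; these are two standard phrasings of the same argument.
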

As has been observed in~\cite{Hall-Hogben-Martin-Shader:exp-min-rk:2010}, this theorem is implicit in the proof of Theorem~1.1 of~\cite{Ronyai-Babai-Ganapathy:zero:2001} (for the sake of completeness, the proof is repeated in the appendix).  %
It has been used in the context of minimum rank problems before (e.g., \cite{Mallik-Shader:skew-rk:2016,Hall-Hogben-Martin-Shader:exp-min-rk:2010}), but our use requires slightly more work.

\mypar%
Given positive integers $r < n$, let us say that a \textit{G-pattern} is an $r\times n$ matrix whose entries are the symbels $0$, $1$, and $*$, with the following properties.
\begin{enumerate}[label=(\arabic*),nosep]
\item Every column contains at most one~1, and every column containing a~1 contains no~$*$s.
\item In every row, the leftmost entry different from~0 is a~1, and every row contains at most one~1.
\item Rows containing a~1 (i.e., not all-zero rows) have smaller row indices than rows containing no~1 (i.e., all-zero rows).  In other words, the all-zero rows are at the bottom of~$P$.
\end{enumerate}

We say that an $r\times n$ matrix~$Y$ has \textit{G-pattern}~$P$, if
$Y_{j,\ell} = 0$ if $P_{j,\ell}=0$, %
and $Y_{j,\ell} = 1$ if $P_{j,\ell}=1$.  %
There is no restriction on the $Y_{j,\ell}$ for which $P_{j,\ell}=*$.

``G'' stands for ``Gaussian elimination using row operations''.  We will need the following tree easy lemmas.

\begin{lemma}\label{lem:rowop-pattern}
  Any $r\times n$ matrix~$Y'$ can be transformed, by Gaussian elimination using only row operations, into a matrix~$Y$ which has some G-pattern.
\end{lemma}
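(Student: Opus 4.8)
The plan is to reduce the lemma to the existence of reduced row echelon form. Recall that ``Gaussian elimination using only row operations'' permits the three elementary row operations: multiplying a row by a nonzero scalar, adding a scalar multiple of one row to another, and swapping two rows. Running the standard Gauss--Jordan procedure on~$Y'$ then yields a matrix~$Y$ in \emph{reduced} row echelon form; I would record the four properties I need: the leftmost nonzero entry of each nonzero row is a~$1$ (its \emph{pivot}), the pivot of a lower nonzero row lies strictly to the right of that of any higher nonzero row, every zero row lies below every nonzero row, and in each pivot column every entry other than the pivot is~$0$. (The last property is what distinguishes the reduced form from a plain echelon form.)

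It then remains to exhibit a G-pattern~$P$ that~$Y$ realizes. I would take $P_{j,\ell} := 1$ exactly at the pivot positions of~$Y$; $P_{j,\ell} := 0$ at every position where the reduced echelon form forces a~$0$, i.e.\ whenever row~$j$ is a zero row, or $\ell$ lies strictly left of the pivot of a nonzero row~$j$, or $\ell$ is a pivot column and $(j,\ell)$ is not its pivot position; and $P_{j,\ell} := *$ at the remaining positions (non-pivot columns lying weakly right of the pivot of a nonzero row). By construction $Y_{j,\ell}=0$ wherever $P_{j,\ell}=0$ and $Y_{j,\ell}=1$ wherever $P_{j,\ell}=1$, so~$Y$ has G-pattern~$P$; and a row of~$P$ is all-zero exactly when the corresponding row of~$Y$ is zero, which gives property~(3) of a G-pattern.

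To finish I would check properties~(1) and~(2). For~(2): in a nonzero row~$j$, every entry strictly left of the pivot is~$0$ in~$P$ and the pivot entry is~$1$, so the leftmost nonzero symbol of that row of~$P$ is a~$1$, and it is the only~$1$ there since all other nonzero symbols of the row are~$*$; a zero row of~$P$ has no nonzero symbol, so the condition is vacuous. For~(1): a column of~$P$ containing a~$1$ is a pivot column, which contains a single pivot and hence a single~$1$; and because~$Y$ is in reduced form that pivot column has all its other entries equal to~$0$, so~$P$ assigns them~$0$ rather than~$*$, and the column contains no~$*$.

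I do not expect a genuine obstacle: the substance is the textbook fact that Gauss--Jordan elimination terminates in reduced row echelon form, plus the routine observation that the zero/pivot structure of such a matrix is exactly a G-pattern. The two points deserving a line of care are that row swaps must be allowed among the elementary operations (to push the zero rows to the bottom for property~(3)), and that the \emph{reduced} form---not merely an echelon form---is needed, since only then are pivot columns coordinate vectors, which is what makes property~(1)'s ban on~$*$'s in a column containing a~$1$ hold.
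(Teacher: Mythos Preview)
Your proposal is correct and follows essentially the same approach as the paper: both arguments perform Gauss--Jordan elimination to bring~$Y'$ into reduced row echelon form and then read off the G-pattern from the pivot structure. Your write-up is in fact more careful than the paper's sketch in explicitly verifying properties~(1)--(3), and you correctly flag that the \emph{reduced} form (pivot columns are coordinate vectors) is what is needed for property~(1).
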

\begin{proof}[Proof (sketch).]
  If $Y'$ has no nonzero entries, we are done.  Otherwise   start with the left-most column containing a nonzero entry, say $(j,\ell)$.  Scale row~$j$ that entry a~1, permute the row to the top, and add suitable multiples of it to the other rows to make every entry below the~1 vanish.

  If all columns $1,\dots,\ell$ have been treated such that column~$\ell$ has a unique~1 in row, say $j(\ell)$, consider the remaining matrix $\{j(\ell)+1,\dots r\}\times\{\ell+1,\dots,n\}$.  If every entry is a~0, we are down.  Otherwise, find the leftmost nonzero entry in the block; suppose it is in column $\ell'$ and row $j'$.  Scale row~$j'$ to make that entry a~1, permute row~$j'$ to $j(\ell)+1$, and add suitable multiples of it to all other rows $\{1,\dots,r\}\setminus\{j(\ell)+1\}$ to make every entry below the~1 vanish.
\end{proof}

\begin{lemma}\label{lem:max-numo-ast-in-pattern}
  For every $r\times n$ G-pattern matrix~$P$, the number of $*$-entries in~$P$ is at most $r(n-r/2)$.
\end{lemma}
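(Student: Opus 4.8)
The plan is to count the $*$-entries one row at a time, after throwing away the all-zero rows (which contain no $*$ at all by definition). So suppose $P$ has $s\le r$ nonzero rows. By property~(2) each such row has a unique $1$, its pivot, and by property~(1) these $s$ pivots occupy $s$ \emph{distinct} columns; say row~$j$ (for $j=1,\dots,s$) has its pivot in column~$c_j$. Now fix a nonzero row~$j$. By property~(2), every entry strictly to the left of column~$c_j$ is $0$, and no entry strictly to the right of~$c_j$ is a~$1$. Moreover, any column that is a pivot column of some row contains a~$1$ and hence, by property~(1), contains no~$*$. Therefore the $*$-entries of row~$j$ can only lie in the columns that are to the right of~$c_j$ \emph{and} are not pivot columns; there are at most $(n-c_j)-\abs{\{k : c_k > c_j\}}$ of these.

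Next I would sum this bound over $j=1,\dots,s$. The double sum $\sum_{j}\abs{\{k : c_k>c_j\}}$ counts ordered pairs of distinct pivots, one larger than the other, so it equals exactly $\binom{s}{2}$. For the remaining part, $\sum_j (n-c_j) = sn-\sum_j c_j$, and since $c_1,\dots,c_s$ are distinct elements of $[n]$ we have $\sum_j c_j \ge 1+2+\dots+s = \binom{s+1}{2}$. Combining, the number of $*$-entries of~$P$ is at most
\begin{equation*}
  sn - \binom{s+1}{2} - \binom{s}{2} = sn - s^2 = s(n-s).
\end{equation*}

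Finally I would turn $s(n-s)$ into the claimed $r(n-r/2)$ using only $0\le s\le r < n$: since $s\ge 0$ we have $s(n-s)\le s(n-s/2)$, and since the map $x\mapsto x(n-x/2)$ has derivative $n-x>0$ on $[0,r]$, it is increasing there, so $s(n-s/2)\le r(n-r/2)$, finishing the proof. I expect the only mildly delicate point is this last step: the bound is genuinely lossy when $P$ has strictly fewer than $r$ nonzero rows, and one has to notice that $r(n-r/2)$ still dominates $s(n-s)$ in that regime (rather than attempting to prove the sharper $s(n-s)$ that the argument actually yields); everything else is elementary bookkeeping with the echelon-type structure of a G-pattern.
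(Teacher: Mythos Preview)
Your argument is correct. The paper takes a different, more structural route: it observes that a G-pattern (with the maximal $*$-placement) is determined by its sorted pivot columns $c_1<\dots<c_s$, then argues by local improvement---shifting a pivot one column to the left (when the slot is free) can only increase the $*$-count, and adding a further pivot row when $s<r$ cannot decrease it---to reduce to the single extremal configuration $s=r$, $c_i=i$, which it then counts directly as $\sum_{i=1}^{r}(n-i)=rn-r(r+1)/2\le r(n-r/2)$. Your row-by-row summation avoids the extremal-configuration reduction altogether and actually yields the sharper intermediate bound $s(n-s)$ before relaxing to $r(n-r/2)$; the paper's shifting argument, in exchange, makes the maximizing pattern explicit.
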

\begin{proof}[Proof (sketch).]
  The G-pattern matrix~$P$ is uniquely determined by $c_1<\dots<c_s$, the (sorted) list of columns of~$P$ which contain a~$1$.  With $c_0 := 0$, for $i=1,\dots,s$, if $c_{i-1} < c_i - 1$, then replacing $c_i$ by $c_i - 1$ gives us a G-pattern matrix with one more~$*$ entry.  Hence, we may assume that $c_i = i$ for $i=1,\dots,s$.  If $s<r$, then adding $s+1$ to the set of 1-columns cannot decrease the number of $*$-entries (in fact, it increases the number, unless $s+1=n$).  Hence, we may assume that $s=r$.  The number of $*$-entries in the resulting (unique) G-pattern matrix is
  \begin{equation*}
    n-1 + \dots + n-r = rn - r(r+1)/2 \le r(n - r/2),
  \end{equation*}
  as promised.
\end{proof}

\begin{lemma}\label{lem:numo-patterns}
  Let $\rho \in \lt]0,.49\rt]$.  The number of $n\times \rho n$ G-pattern matrices is at most
  \begin{equation*}
    O(1) \cdot \binom{n}{\rho n}
  \end{equation*}
  (with an absolute constant in the big-O).
\end{lemma}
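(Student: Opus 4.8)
We need to count $n \times \rho n$ G-pattern matrices, i.e. $r \times n$ G-patterns with $r = \rho n$ — wait, let me re-read. The statement says "$n \times \rho n$ G-pattern matrices", but G-patterns were defined as $r \times n$ matrices with $r < n$. So here the roles are: the G-pattern is $r \times n$ with... hmm, $n \times \rho n$ would mean $\rho n$ rows and $n$ columns, so $r = \rho n$. That matches $r < n$ since $\rho \le 0.49$. So I want to show the number of $\rho n \times n$ G-patterns is $O(1)\binom{n}{\rho n}$.

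Let me think about how the plan should go.

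**Plan.** A G-pattern is an $r \times n$ matrix over $\{0,1,*\}$ satisfying the three listed properties. The key observation — already exploited in the proof of Lemma~\ref{lem:max-numo-ast-in-pattern} — is that a G-pattern is \emph{completely determined} by the set $C = \{c_1 < \dots < c_s\} \subseteq [n]$ of columns containing a $1$, together with the number $s \le r$ itself (equivalently, by $C$ alone, since $s = |C|$). Indeed: property (1) says each such column is all-zero except for its single $1$; property (2) says row $i$ (for $i \le s$) has its $1$ in column $c_i$, is $0$ to the left of $c_i$, and is $*$ elsewhere (in columns not in $C$); property (3) says rows $s+1,\dots,r$ are entirely $0$. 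So the map $P \mapsto C(P)$ from G-patterns to subsets of $[n]$ of size at most $r$ is injective.

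**Steps.** First I would state and prove this injectivity (the determination of $P$ by $C$), spelling out each entry of $P$ as a function of $C$ as above; this is the conceptual heart and is short. It immediately gives that the number of $r \times n$ G-patterns is at most $\sum_{s=0}^{r}\binom{n}{s}$. Second, I would invoke the standard tail bound on binomial coefficients: for $r \le n/2$ (here $r = \rho n \le 0.49 n < n/2$), $\sum_{s=0}^{r}\binom{n}{s} \le \binom{n}{r}\cdot \frac{n-r+1}{n-2r+1}$, or more crudely $\sum_{s=0}^{r}\binom{n}{s} = O(1)\cdot\binom{n}{r}$ with an absolute constant whenever $r \le (1/2 - \delta)n$ for fixed $\delta > 0$ — exactly the regime $\rho \le 0.49$ guarantees, with $\delta = 0.01$. (This is the same estimate used earlier in the paper, e.g. via Theorem~1.1 of \cite{BollobasBkRndGraphs}.) That yields the bound $O(1)\cdot\binom{n}{\rho n}$ with an absolute constant, as claimed.

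**Main obstacle.** There is no serious obstacle; the only thing requiring a little care is making sure the constant in the big-$O$ is genuinely \emph{absolute} (independent of $\rho$ in the stated range), which is why the hypothesis is $\rho \le 0.49$ rather than $\rho < 1/2$: the ratio $\frac{n-r+1}{n-2r+1}$ in the partial-sum estimate blows up as $\rho \to 1/2$, so bounding it uniformly needs the fixed gap. I would simply note $\frac{n-r+1}{n-2r+1} \le \frac{n}{0.02\, n} = 50$ for $r \le 0.49 n$, giving the absolute constant. Everything else is bookkeeping already rehearsed in Lemmas~\ref{lem:max-numo-ast-in-pattern} and \ref{lem:rowop-pattern}.
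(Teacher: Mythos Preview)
Your proposal is correct and follows essentially the same route as the paper's proof: observe that a G-pattern is determined by the set of columns containing a~$1$, so the count is $\sum_{s=0}^{\rho n}\binom{n}{s}$, and then bound this partial sum by a constant times $\binom{n}{\rho n}$ using the standard geometric-ratio estimate (the paper phrases it via Bollob\'as' Theorem~1.1, getting the constant $\frac{1-\rho}{1-2\rho}\le 30$, while you get $50$ from the cruder bound $\frac{n-r+1}{n-2r+1}\le \frac{n}{0.02n}$). The only substantive content, the injectivity observation, is identical.
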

\begin{proof}[Proof (sketch).]
  A G-pattern matrix is uniquely determined by the set of columns containing a~1, which can be between $0$ and $\rho n$.  Hence, the number of $n\times \rho n$ G-pattern matrices is
  \begin{equation}\label{eq:tmp:binomsum}\tag{$*$}
    \sum_{j=0}^{\rho n} \binom{ n }{ j }.
  \end{equation}
  From here on, we do the usual tricks.  As in the previous section, we use the helpful fact (Theorem 1.1 in~\cite{BollobasBkRndGraphs}) that
  \begin{equation*}
    \eqref{eq:tmp:binomsum}
    \le \frac{1}{1-\frac{\rho}{1-\rho}} \binom{n}{\rho n}.
  \end{equation*}
  A swift calculation shows that $1/(1-\rho/(1-\rho)) \le 30$, which completes the proof.
\end{proof}

We are now ready to complete the Proof of Theorem~\ref{thm:main}\ref{thm:main:const-p}.

\begin{proof}[Proof of Theorem \ref{thm:main}\ref{thm:main:const-p}]
  Let $M$ be a fooling-set matrix of size~$n$ and rank at most~$r$.  It can be factored as $M=XY$, for an $n\times r$ matrix~$X$ and an $r\times n$ matrix~$Y$.  By Lemma~\ref{lem:rowop-pattern}, through applying row operations to~$Y$ and corresponding column operations to~$X$, we can assume that~$Y$ has a G-pattern.

  Now we use Theorem~\ref{thm:RoBaGa}, for every G-pattern matrix separately.  For a fixed G-pattern matrix~$P$, the variables of the polynomials are
  \begin{itemize}
  \item $X_{k,j}$, where $(k,j)$ ranges over all pairs $\{1,\dots,n\} \times \{1,\dots,r\}$; and
  \item $Y_{j,\ell}$, where $(j,\ell)$ ranges over all pairs $\{1,\dots,r\} \times \{1,\dots,n\}$ with $P_{j,\ell}=*$.
  \end{itemize}
  The polynomials are: for every $(k,\ell)\in\{1,\dots,n\}^2$, with $k\ne \ell$,
  \begin{equation*}
    f_{k,\ell} = \sum_{\substack{j\\P_{j,\ell}=1}} X_{k,j} + \sum_{\substack{j\\P_{j,\ell}=*}} X_{k,j} Y_{j,\ell}.
  \end{equation*}
  Clearly, there are $n(n-1)$ polynomials; the number of variables is $2rn -r^2/2$, by Lemma~\ref{lem:max-numo-ast-in-pattern} (and, if necessary, using ``dummy'' variables which have coefficient~0 always).  The polynomials have degree at most~2.

  By Theorem~\ref{thm:RoBaGa}, we find that the number of zero-nonzero patterns with Hamming weight at most~$m$ of fooling-set matrices with rank at most~$r$ which result from this particular G-pattern matrix~$P$ is at most
  \begin{equation*}
    \binom{2rn -r^2/2 + 2m}{2rn -r^2/2}.
  \end{equation*}

  Now, take a $\rho < \nfrac12$, and let $r := \rho n$.  Summing over all G-pattern matrices~$P$, and using Lemma~\ref{lem:numo-patterns}, we find that the number of zero-nonzero patterns with Hamming weight at most~$m$ of fooling-set matrices with rank at most~$\rho n$ is at most an absolute constant times
  \begin{equation*}
    \binom{n}{\rho n}
    \binom{(2\rho -\rho^2/2)n^2 + 2m}{(2\rho -\rho^2/2)n^2}.
  \end{equation*}

  \myparbig%
  Now, take a constant ${p} \in \lt]0,1\rt]$, and let $m := \lceil {p}\binom{n}{2} \rceil$.  The number of fooling-set patterns of size~$n$ with density~${p}$ is
  \begin{equation*}
    \binom{ \binom{n}{2} }{ m } 2^m,
  \end{equation*}
  and hence, the probability that the minimum rank of a fooling-set matrix with zero-nonzero pattern $R(n,{p})$ has rank at most~$r$ is at most
  \begin{equation*}
    \frac%
    {\displaystyle \binom{n}{\rho n}\binom{(2\rho -\rho^2/2)n^2 + 2m}{(2\rho -\rho^2/2)n^2} }%
    {\displaystyle     \binom{ \binom{n}{2} }{ m } 2^m                                      }
    \le
    \frac%
    {\displaystyle \binom{n}{\rho n}\binom{(2\rho -\rho^2/2)n^2 + 2pn^2/2}{(2\rho -\rho^2/2)n^2} }%
    {\displaystyle     \binom{ n^2/2 }{ pn^2/2 } 2^{pn^2/2 +O(pn)}                               }
    =
    \frac%
    {\displaystyle \binom{n}{\rho n}\binom{\alpha n^2 + pn^2}{\alpha n^2} }%
    {\displaystyle     \binom{ n^2/2 }{ pn^2/2 } 2^{pn^2/2 +O(pn)}                               }
  \end{equation*}
  where we have set $\alpha := 2\rho -\rho^2/2$.
  As in the previous section, we use~\eqref{eq:binom-entropy:oinwfe} to estimate this expression, and we obtain
  \begin{equation*}
    \ln\lt(
    \frac%
    {\displaystyle \binom{n}{\rho n}\binom{\alpha n^2 + pn^2}{\alpha n^2} }%
    {\displaystyle     \binom{ n^2/2 }{ pn^2/2 } 2^{pn^2/2 +O(pn)}                               }
    \rt)
    =
    n H(\rho) + n^2\Bigr( \alpha H\bigl(\alpha/(\alpha+p)\bigr) - \tfrac12 H(p) -(\ln 2)p/2 \Bigr) + O(pn).
  \end{equation*}
  The dominant term is the one where~$n$ appears quadratic.  The expression $\tfrac12 H(p) +(\ln 2)p/2$ takes values in $\lt]0,1\rt[$.  For every fixed~$p$, the function $g\colon \alpha \mapsto \alpha H\bigl(\alpha/(\alpha+p)\bigr)$ is strictly increasing on $[0,\nfrac12]$ and satisfies $g(0)=0$.  Hence, for every given constant~$p$, there exists an $\alpha$ for which the coefficient after the $n^2$ is negative.

  (As indicated in the introduction, such an $\alpha$ must tend to~$0$ with $p\to 0$.)
\end{proof}


\appendix
\section{Proof of Lemma~\ref{lem:dietzfelbinger-minrk}}
Let $(x_1,y_1),\dots,(x_n,y_n) \in X\times Y$ be a fooling set in~$f$, and let~$A$ be a matrix over~$\FF$ with $A_{x,y} = 0$ iff $f(x,y)=0$.  Consider the matrix $B := A\otimes A^\Tp$.  This matrix~$B$ contains a permutation matrix of size~$n$ as a submatrix: for $i=1,\dots,n$, $B_{(x_i,x_i),(y_i,y_i)} = A_{x_i,y_i} A_{y_i,x_i} = 1$ but for $i\ne j$, $B_{(x_i,x_i),(y_j,y_j)} = A_{x_i,y_j} A_{y_i,x_j} = 0$.  Hence,
\begin{equation*}
  n \le \rk(B) = \rk(A)^2.
\end{equation*}
\qed

\section{Proof of Theorem~\ref{thm:RoBaGa}}
Since Theorem~\ref{thm:RoBaGa} is not explicitly proven in~\cite{Ronyai-Babai-Ganapathy:zero:2001}, we give here the slight modification of the proof of Theorem~1.1 from Theorem~\ref{thm:RoBaGa} which proves Theorem~\ref{thm:RoBaGa}.  The only difference between the following proof and that in~\cite{Ronyai-Babai-Ganapathy:zero:2001} is where the proof below upper-bounds the degrees of the polynomials $g_y$.
\begin{proof}[Proof of Theorem~\ref{thm:RoBaGa}]
  Consider the set
  \begin{equation*}
    S := \Bigl\{    y \in \{0,1\}^h   \Bigm|  \abs{y} \le m   \text{ and }   y = \sigma(f(u)) \text{ for some~$u\in \FF^n$} \Bigr\}.
  \end{equation*}
  For each such~$y$, let $u_y \in \FF^n$ be such that $\sigma(f(u_y)) = y$, and let
  \begin{equation*}
    g_y := \prod_{j, y_j=1} f_j.
  \end{equation*}
  Now define a square matrix~$A$ whose row- and column set is~$S$, and whose $(y,z)$ entry is $g_y(u_z)$.  We have
  \begin{equation*}
    g_y(u_z) \ne 0 \iff z \ge y,
  \end{equation*}
  with entry-wise comparison, and ``$1>0$''.  Hence, if the rows and columns are arranged according to this partial ordering of~$S$, the matrix is upper triangular, with nonzero diagonal, so it has full rank, $\abs{S}$.  This implies that the $g_y$, $y\in S$, are linearly independent.

  Since each~$g_y$ has degree at most $\abs{y}\cdot d \le md$, and the space of polynomials in~$n$ variables with degree at most~$md$ has dimension $\binom{n+md}{md}$, it follows that $S$ has at most that many elements.
\end{proof}
\end{document}